\newif\ifOneColumn 
\newif\ifLinksBlack 
\newtheorem{theorem}{Proposition}
\DeclareMathOperator*{\argmin}{arg\,min}
\newcommand{\Tr}{Tr}
\newcommand{\I}{\mathbf{I}}
\newcommand{\A}{\mathbf{A}}
\newcommand{\Eu}{\mathbf{E}_{\mathrm{u}}}
\newcommand{\Ed}{\mathbf{E}_{\mathrm{d}}}
\newcommand{\E}{\mathbb{E}}
\newcommand{\e}{\mathbf{e}}
\newcommand{\D}{\mathbf{D}}
\newcommand{\Q}{\mathbf{Q}}
\newcommand{\Wbf}{\mathbf{W}}
\newcommand{\Pbf}{\mathbf{P}}
\newcommand{\Hbf}{\mathbf{H}}
\newcommand{\h}{\mathbf{h}}
\newcommand{\w}{\mathbf{w}}
\newcommand{\p}{\mathbf{p}}
\newcommand{\normht}{\|\mathbf{h}\|^{2}}
\newcommand{\normhf}{\|\mathbf{h}\|^{4}}
\newcommand{\SINRap}{\widetilde{\text{SINR}}}
\newcommand{\hhovernormh}{\frac{\h\h^{H}}{\normht}}
\newcommand{\hatx}{\hat{\mathbf{x}}}
\newcommand{\hatxu}{\hat{\mathbf{x}}^{\mathrm{u}}}
\newcommand{\yu}{\mathbf{y}^{\mathrm{u}}}
\newcommand{\xu}{\mathbf{x}^{\mathrm{u}}}
\newcommand{\yd}{\mathbf{y}^{\mathrm{d}}}
\newcommand{\xd}{\mathbf{x}^{\mathrm{d}}}
\newcommand{\xdd}{\tilde{\mathbf{x}}^{\mathrm{d}}}
\newcommand{\nup}{\mathbf{n}^{\mathrm{u}}}
\newcommand{\zu}{\mathbf{z}^{\mathrm{u}}}
\newcommand{\nd}{\mathbf{n}^{\mathrm{d}}}
\newcommand{\ymd}{y_{\textit{m}}^{\mathrm{d}}}
\newcommand{\ymu}{y_{\textit{m}}^{\mathrm{u}}}
\def\BibTeX{{\rm B\kern-.05em{\sc i\kern-.025em b}\kern-.08em
    T\kern-.1667em\lower.7ex\hbox{E}\kern-.125emX}}
\begin{document}

\onecolumn 
\textcircled{c} 2020 IEEE.  Personal use of this material is permitted.  Permission from IEEE must be obtained for all other uses, in any current or future media, including reprinting/republishing this material for advertising or promotional purposes, creating new collective works, for resale or redistribution to servers or lists, or reuse of any copyrighted component of this work in other works.\\\\
\href{https://doi.org/10.1109/TSP.2020.2964496}{10.1109/TSP.2020.2964496}
\thispagestyle{empty} 
\twocolumn 
\setcounter{page}{1} 

\title{Decentralized Massive MIMO Processing Exploring Daisy-chain Architecture and Recursive Algorithms}

\author{
	Jes\'{u}s~Rodr\'{i}guez~S\'{a}nchez~\href{https://orcid.org/0000-0002-5531-1071}{\includegraphics[scale=0.04]{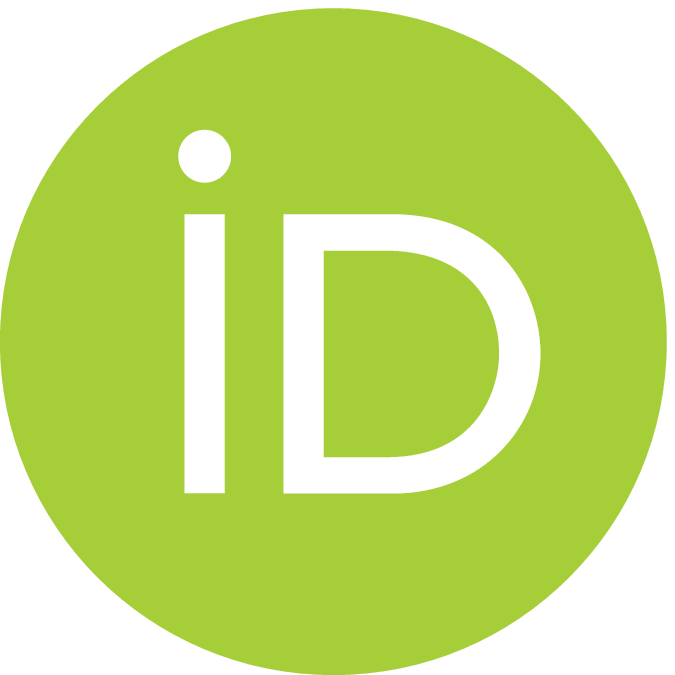}},~\IEEEmembership{Student~Member,~IEEE,}
	Fredrik~Rusek~\href{https://orcid.org/0000-0002-2077-3858}{\includegraphics[scale=0.04]{orcid.eps}},~\IEEEmembership{Member,~IEEE,}
	Ove~Edfors~\href{https://orcid.org/0000-0001-5966-8468}{\includegraphics[scale=0.04]{orcid.eps}},~\IEEEmembership{Senior~Member,~IEEE,}
	Muris~Sarajli\'{c}~\href{https://orcid.org/0000-0001-5107-8164}{\includegraphics[scale=0.04]{orcid.eps}},~\IEEEmembership{Student~Member,~IEEE,}
	and~Liang~Liu~\href{https://orcid.org/0000-0001-9491-8821}{\includegraphics[scale=0.04]{orcid.eps}},~\IEEEmembership{Member,~IEEE}
\thanks{This paper is build upon previous results presented at the 2018 SiPS Conference \cite{jesus} and in IEEE Wireless Communication Letters \cite{muris}.
The authors are with the Electrical and Information Technology, Lund University, 22363 Lund, Sweden (e-mail: \{jesus.rodriguez, fredrik.rusek, ove.edfors, muris.sarajlic, liang.liu\}@eit.lth.se).}
}

\maketitle
\begin{abstract}
	Algorithms for Massive MIMO uplink detection and downlink precoding typically rely on a centralized approach, by which baseband data from all antenna modules are routed to a central node in order to be processed. In the case of Massive MIMO, where hundreds or thousands of antennas are expected in the base-station, said routing becomes a bottleneck since interconnection throughput is limited. This paper presents a fully decentralized architecture and an algorithm for Massive MIMO uplink detection and downlink precoding based on the Coordinate Descent (CD) method, which does not require a central node for these tasks. Through a recursive approach and very low complexity operations, the proposed algorithm provides a good trade-off between performance, interconnection throughput and latency. Further, our proposed solution achieves significantly lower interconnection data-rate than other architectures, enabling future scalability.
\end{abstract}

\begin{IEEEkeywords}
	Massive MIMO, Decentralized Architecture, Precoding, Detection, zero-forcing, Coordinate Descent, Kaczmarz, inter-connection data-rate, SINR.
\end{IEEEkeywords}

\section{Introduction}
\label{section:intro}

Massive MIMO is one of the most relevant technologies in wireless communications \cite{marzetta,rusek}. Among the key features of this technology are high spectral efficiency and improved link reliability, making it a key enabler for 5G. Massive MIMO exploits spatial diversity far beyond traditional MIMO systems by employing a large scale antenna array in the base-station (BS) with hundreds or possibly even thousands of elements. This large number of elements allows for unprecedented spatial resolution and high spectral efficiency, while providing simultaneous service to several users within the same time-frequency resource.

Despite all the advantages of Massive MIMO, there are still challenges from an implementation point of view. One of the most critical ones is sending data from the BS antennas to the central processing unit (CPU) and vice-versa, and the high interconnection throughput it requires. In current set-ups, uplink detection algorithms based on zero-forcing (ZF) equalizer typically rely on a centralized architecture, shown in Fig. \ref{fig:BS_centralized}, where baseband samples are collected in the CPU for obtaining channel state information (CSI) and further matrix inversion, which allows data estimation and further detection. The same argument is valid for downlink precoding. In order to avoid dedicated links between antenna modules and CPU, a shared bus is typically used to exchange this data. In case of LuMaMi testbed \cite{lumami,lumami2}, the shared bus was reported to support an aggregated data-rate of 384Gps, which exceed base-station internal interface standards such as eCPRI \cite{ecpri}. Additionally, the pin-count of integrated circuits (IC) limits the number of links the IC can handle simultaneously and thus the throughput. Due to this high data-rate, the power appears as another potential limitation. This combination of factors are considered as the main bottleneck in the system and a clear limitation for array scalability. In this paper we will address the inter-connection throughput limitation by decreasing its value per link and consequently reducing the impact of the other two (pin-count and power).

The inter-connection bottleneck has been noted in several previous studies on different architectures for Massive MIMO BSs \cite{argos,Bertilsson,puglielli,lumami,cavallaro,li_jeon,jeon_li}. As a solution, most of these studies recommend moving to a decentralized approach where uplink estimation and downlink precoding can be performed locally in processing nodes close to the antennas (final detection can still be done in a CPU). However, to achieve that, CSI still needs to be collected in the CPU, where matrix inversion is performed \cite{argos,Bertilsson,lumami}, imposing an overhead in data shuffling.

The CSI problem is addressed in \cite{cavallaro}, where CSI is obtained and used only locally (not shared) for precoding and estimation, with performance close to MMSE. However, this architecture relies on the CPU for exchanging a certain amount of consensus information between the nodes, and this exchange negatively impacts the processing latency and throughput \cite{li_jeon}, and therefore limits the scalability of this solution. In order to solve these problems, feedforward architectures for detection \cite{jeon_li} and precoding \cite{li_jeon} have been proposed recently, where the authors present a partially decentralized (PD) architecture for detection and precoding, which achieves the same results as linear methods (MRC, ZF, L-MMSE), and therefore becomes optimal when $M$ is large enough. Partial Gramian matrices from antennas are added up before arriving to a processing unit where the Gramian is inverted.

In \cite{argos}, a flat-tree structure with daisy-chained nodes was presented. The authors propose conjugate beamforming as a fully decentralized method with the corresponding penalty in system capacity. In the same work it is also pointed out that by following this topology the latency was being severely compromised. The more detailed analysis on latency is thus needed to evaluate the algorithm.

In this article we propose a fully decentralized architecture and a recursive algorithm for Massive MIMO detection and precoding, which is able to achieve very low inter-connection data-rate without compromising latency.
The proposed algorithm is pipelined so that it runs in a distributed way at the antenna processing units, providing local vectors for estimation/detection that approximate to the zero-forcing solution.
We make use of the Coordinate Descent (CD) algorithm, which is detailed in Section \ref{section:CD}, to compute these vectors.

There is previous work based on CD, such as \cite{li_CD}. The main difference is that the coordinate update in \cite{li_CD} is done per user basis, i.e., a different user index is updated every iteration, while in our proposed method the coordinate update is done per antenna basis, updating all users at once.

We extend the work presented in \cite{jesus} and \cite{muris}, which are also based on decentralized daisy-chain architecture. The novelties of the present work compared to these two is as follows:
\begin{itemize}
\item A common strategy for downlink precoding and uplink equalization is presented, in contrast to \cite{jesus} and \cite{muris}, which only covers uplink and downlink separately.
\item The algorithm has been modified that serial processing is only needed when new CSIs are estimated. The corresponding filtering phase can be conducted in parallel to reduce latency, in contrast to \cite{jesus}, where serial processing is always needed, which increases the latency.
\item A recommended step-size is provided, in contrast to \cite{jesus}.
\item An analytical expression for resulting SINR and a complete performance analysis is presented in this paper.
\item Complexity analysis from a general point of view (not attached to any specific implementation) is provided, which includes: inter-connection data-rate, memory size and latency. In \cite{jesus}, only inter-connection data-rates are analyzed.
\end{itemize}

Decentralized architectures, as shown in Fig. \ref{fig:BS_decentralized}, have several advantages compared to the centralized counterpart, as shown in Fig. \ref{fig:BS_centralized}. For example, they overcome bottlenecks by finding a more equal distribution of the system requirements among the processing nodes of the system. Apart from this, data localization is a key characteristic of decentralized architectures. In uplink, the architecture allows data to be consumed as close as possible to where it is generated, minimizing the amount to transfer, and therefore saving throughput and energy. To achieve data localization, processing nodes need to be located near the antenna, where they perform processing tasks locally such as channel and data estimation. Local CSI is estimated and stored locally in each, without any need to share it with any other nodes in the system. This approach has been suggested previously in \cite{argos,Bertilsson,jeon_li,cavallaro,li_jeon,puglielli}, and we take advantage of it in the proposed solution.

The remainder of the paper is organized as follows. In Section \ref{section:background} the preliminaries are presented, comprising the system model for uplink and downlink, together with an introduction to linear processing and the ZF method. Section \ref{section:central_vs_decentral} is dedicated to a comparison between the centralized and decentralized architectures and reasoning why the latter one is needed, together with an overview of the daisy-chain topology. The proposed algorithm, based on CD, is presented in Section \ref{section:CD}. In \ref{section:analysis} closed-form expressions of the SIR and SINR are provided for this algorithm, together with  interconnection data-rates, latency and memory requirements of the proposed solution. Finally, Section \ref{section:conclusions} summarizes the conclusions of this publication. 

Notation: In this paper, lowercase, bold lowercase and upper bold face
letters stand for scalar, column vector and matrix, respectively. The
operations $(.)^T$, $(.)^*$ and $(.)^H$ denote transpose, conjugate and conjugate transpose respectively.
The $i$-th element of vector $\h$ is denoted as $h_{i}$. A vector $\w$ and a matrix $\A$ related to the $m$th antenna is denoted by $\w_m$ and $\A_{m}$, respectively. $A_{i,j}$ denotes element $(i,j)$ of $\A$. $\mathbf{A}_{m}(i,j)$ denotes element $(i,j)$ of the $m$-th matrix in the sequence $\{\A_{m}\}$. The $k$th coordinate vector in $\mathbb{R}^{K}$ is defined as $\e_{k}$. Kronecker delta is represented as $\delta_{ij}$. Probability density function and cumulative density function are denoted respectively as $f_{\mathbf{X}}(x)$ and $F_{\mathbf{X}}(x)$. Computational complexity is measured in terms of the number of complex-valued multiplications.\\

\section{Background}
\label{section:background}
\subsection{System model}
For uplink, we consider a scenario with $K$ single-antenna users transmitting to a BS with an antenna array with $M$ elements. Assuming time-frequency-based channel access, a Resource
Element (RE) represents a unit in the time-frequency grid (also
named subcarrier in OFDM) where the channel is expected to be approximately flat. Under this scenario, the input-output relation is
\begin{equation}
\yu = \Hbf\xu + \nup,
\label{eq:ul_model}
\end{equation}
where $\yu$ is the $M \times 1$ received vector, $\xu$ is the transmitted user data vector ($K \times 1$), $\Hbf=[\h_1 \; \h_2 \, \cdots \, \h_M]^{{T}}$ is the channel matrix ($M \times K$) and $\nup$ an $M \times 1$ vector of white, zero-mean complex Gaussian noise. The entries of $\Hbf$ are i.i.d. zero-mean circularly-symmetric complex-gaussian entries, with rows $\h_{i} \sim \mathcal{CN}(0, \I)$ for all $i$. The noise covariance at the receiver is  $N_{0}\I$. The average transmitted power is assumed to be equal across all users and we assume, without any loss of generality, a unit transmit power. SNR is defined as $\frac{1}{N_{0}}$ and represents the average "transmit" signal-to-noise ratio.

For downlink, if Time Division Duplex (TDD) is assumed, then according to channel reciprocity principle and by employing reciprocity calibration techniques \cite{joao}, it is assumed that within the same coherence time, the channel matrix is the same as in the uplink case, and the system model follows
\begin{equation}
\xdd = \Hbf^{T}\yd + \nd,
\label{eq:dl_model}
\end{equation}
for a RE, where $\yd$ is the $M \times 1$ transmitted vector, $\xdd$ is the received data vector by users ($K \times 1$), and $\nd$ samples of noise ($K \times 1$).

Once the system model is established, we introduce the linear processing fundamentals used for downlink precoding and uplink estimation.

\begin{figure*}\centering
	\footnotesize
	\subfloat[Centralized architecture]{
		\psfrag{1}{$1$}
		\psfrag{M}{$M$}
		\psfrag{RPU}[][][0.7]{$\mathrm{RPU}$}
		\psfrag{RF}[][][0.7]{$\mathrm{RF}$}
		\psfrag{OFDM}[][][0.55]{$\mathrm{OFDM}$}
		\psfrag{CPU}{$\mathrm{CPU}$}
		\psfrag{CHEST}[][][0.6]{$\mathrm{CHEST}$}
		\psfrag{EST}[][][0.6]{$\mathrm{EST}$}
		\psfrag{DET}[][][0.6]{$\mathrm{DET}$}
		\psfrag{DEC}[][][0.6]{$\mathrm{DEC}$}
		\psfrag{Bs}[][][1.0]{$\text{Base Station}$}
		\psfrag{Rc}{$R_\mathrm{c}$}	
		\includegraphics[width=0.35\textwidth]{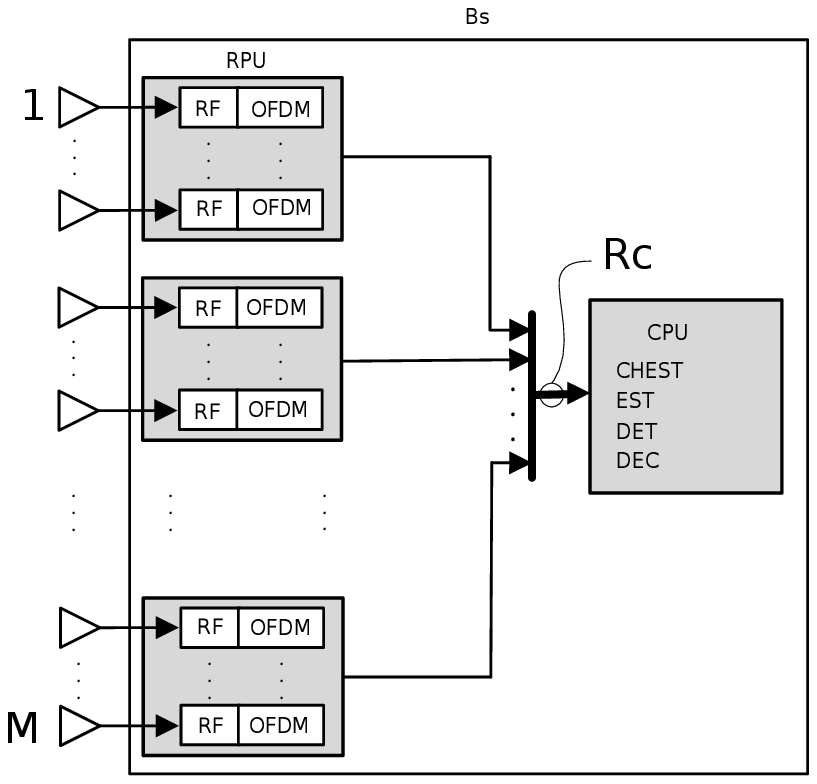}
		\label{fig:BS_centralized}
	}
	\subfloat[Decentralized architecture]{
		\psfrag{1}{$1$}
		\psfrag{M}{$M$}
		\psfrag{RPU}[][][0.7]{$\mathrm{RPU}$}
		\psfrag{RF}[][][0.7]{$\mathrm{RF}$}
		\psfrag{OFDM}[][][0.55]{$\mathrm{OFDM}$}
		\psfrag{CPU}{$\mathrm{CPU}$}
		\psfrag{CHEST}[][][0.5]{$\mathrm{CHEST}$}
		\psfrag{EST}[][][0.6]{$\mathrm{EST}$}
		\psfrag{DET}[][][0.6]{$\mathrm{DET}$}
		\psfrag{DEC}[][][0.6]{$\mathrm{DEC}$}
		\psfrag{Bs}{$\text{Base Station}$}
		\psfrag{Rd}{$R_\mathrm{d}$}
		\includegraphics[width=0.35\textwidth]{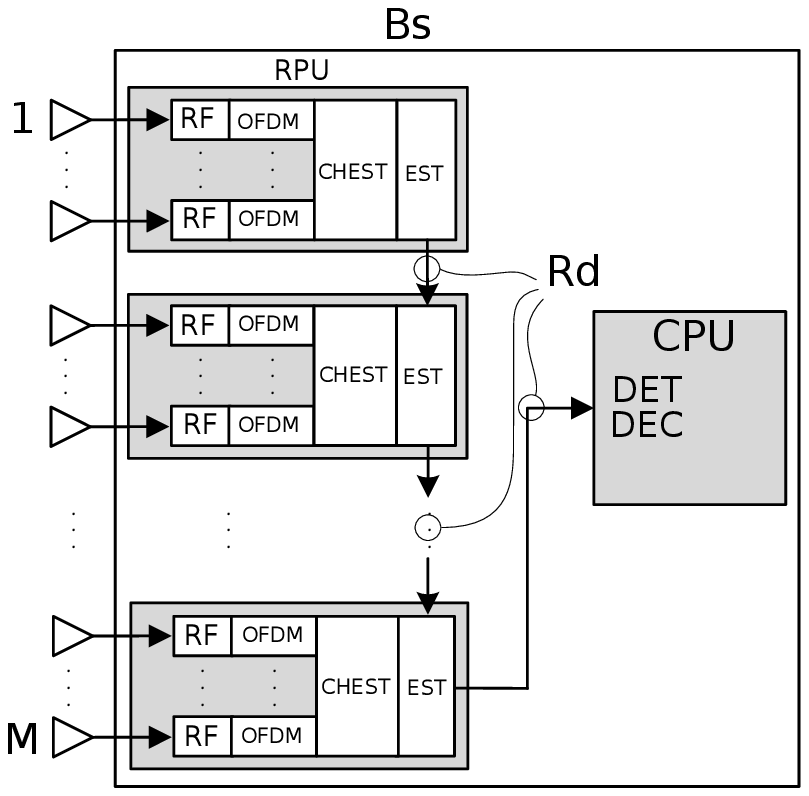}
		\label{fig:BS_decentralized}
	}
	
	\caption{Comparison between base station receiver chain in centralized and fully decentralized architectures for Massive MIMO uplink. Antenna array with $M$ elements is divided into RPUs, each containing a set of antennas. (a): Centralized architecture. Each RPU has one link to transfer baseband samples to the CPU, where the rest of processing tasks are done. (b): Fully decentralized architecture for detection. Each RPU performs RF, ADC, OFDM, channel estimation (CHEST) and data estimation (EST) locally. Detection (DET) and decoding (DEC) is centralized. RPUs are connected to each other by uni-directional links. Only one RPU has a direct connection with the CPU. Proposed algorithms are executed in EST blocks in parallel mode. The points where the interconnection data-rate is estimated are marked by circles and the value is denoted by $\mathrm{R}_{c}$ and $\mathrm{R}_{d}$ for centralized and decentralized respectively. The goal is to have  $\mathrm{R}_{d} \ll \mathrm{R}_{c}$ without compromising performance and latency.}
	\label{fig:comparison}
\end{figure*}

\subsection{Linear Processing}
In this article we focus on linear estimators and precoders, because they show close to optimal performance in Massive MIMO regime while requiring low complexity.

A linear estimator provides $\hatx^u$, which is an estimate of $\xu$, by applying an equalizer filter matrix $\Wbf$ to the vector of observations, $\yu$:
\begin{equation}
\begin{split}
\hatxu &= \Wbf^{H} \yu\\
&= \sum_{m=1}^{M} \w_{m}^{*} \ymu,\\
\end{split}
\label{eq:linear_det}
\end{equation}
where $\Wbf = [\w_{1} \; \w_{2} \, \cdots \, \w_{M}]^{T}$ is an $M \times K$ matrix, $\w_{m}$ is a $K \times 1$ filter vector related to antenna $m$ and $\ymu$ the observation at antenna $m$. As it can be seen the estimate $\hatxu$ is computed by the sum of $M$ partial products. If $\w_{m}$ is obtained and stored locally in the m$th$ antenna module, then the partial products can be computed with local data only, reducing the amount of data to exchange between nodes. From implementation point of view, the linear estimator relies on the accumulation of all partial results according to \eqref{eq:linear_det}, which can be done centrally (fusion node) or distributed.

For downlink, the data vector intended to the users, $\xd$, is precoded with matrix $\Pbf$ as
\begin{equation}
\yd = \Pbf\xd,\\
\label{eq:linear_prec}
\end{equation}
where $\Pbf = [\p_{1} \; \p_{2} \, \cdots \, \p_M]^{T}$ is an $M \times K$ matrix, which fulfills a power constraint $\|\Pbf\|_{F}^{2}\leq P$, such that $P$ is the maximum transmitted power. Particularly for antenna $m$ we have
\begin{equation}
\ymd = \p_{m}^T \xd.\\
\label{eq:linear_prec_i}
\end{equation}
Similarly to uplink, if $\p_{m}$ is obtained and stored locally at the m$th$ antenna module, then $\ymd$ can be computed only with local data after $\xd$ is broadcasted to all antennas.

The zero-forcing (ZF) equalizer, which is one type of linear estimator, constitutes a reference in our analysis. It is defined for uplink estimation as
\begin{equation}
\Wbf_\text{ZF}^{H} = (\Hbf^H \Hbf)^{-1}\Hbf^H,
\label{eq:W_ZF}
\end{equation}
and $\Pbf_\text{ZF}=\Wbf_\text{ZF}^{*}$ for the downlink precoding.

ZF is able to completely cancel inter-user interference (IUI) and reach the promised spectral efficiency of Massive MIMO. However, as ZF is performed in a central processor, the Gramian matrix $\Hbf^{H}\Hbf$ needs to be collected and inverted, which increases the average inter-connection data-rate. The computational load is also increased due to the matrix inversion and posterior matrix multiplication during estimation phase. Taking this into consideration, we look for methods with IUI-cancellation capabilities but with lower requirements for the system.

\subsection{Uplink \& Downlink reciprocity}
Substituting \eqref{eq:ul_model} into \eqref{eq:linear_det} leads to
\begin{equation}
\begin{split}
\hatxu
&= \Eu \xu + \zu\\
\end{split}
\label{eq:Eu}
\end{equation}
for uplink, where $\Eu = \Wbf^{H} \Hbf$ is a $K \times K$ matrix containing the equivalent uplink channel with IUI information and $\mathbf{z}^u$ is the $K \times 1$ post-equalization noise term.

On the other hand, in the downlink, substituting \eqref{eq:linear_prec} into \eqref{eq:dl_model} leads to
\begin{equation}
\begin{split}
\xdd
&= \Ed \xd + \nd,\\
\end{split}
\label{eq:Ed}
\end{equation}
where $\Ed = \Hbf^{T} \Pbf$ is a $K \times K$ matrix containing the equivalent downlink channel with IUI information. For the particular case that $\Pbf^{T} = \Wbf^{H}$, we have $\Ed = \Eu^{T}$, meaning that both equivalent channels are transposed, and therefore experiment the same IUI cancellation properties.
From this result it is clear that once an equalization matrix $\Wbf$ is obtained for uplink detection, it can also be applied for downlink precoding with no extra effort. It is interesting to note that, since $\Pbf^{T} = \Wbf^{H}$, it follows that $\p_i = \w_i^{*}$, so each antenna node can re-use same vector for detection and precoding, ideally reducing complexity and storage needs by half. Said that, in this article we focus mainly on uplink estimation without limiting the results to downlink. In reality, there is a downlink power constraint as the total transmitted power, which is addressed in \ref{section:analysis}.

\section{Centralized vs Decentralized}
\label{section:central_vs_decentral}
In this section we describe the differences between centralized and decentralized Massive MIMO processing and the justification to study the later one.

Uplink estimation based on ZF equalization has two components that should be multiplied: $\Wbf_\text{ZF}$  and $\yu$. The former includes a $K \times K$ matrix inversion, which typically is done in one place, and for that, CSI from all antennas needs to be collected. Apart from that, the observation data vector, $\yu$, is also needed for estimation. This vector is $M \times 1$, increasing considerably the amount of data to transfer and limiting the scalability of the array. Based on those considerations, we can think of two possible architectures for the Massive MIMO base-station: centralized and decentralized.

Fig. \ref{fig:BS_centralized} presents an architecture based on a central baseband processing node, where baseband samples are exchanged between Remote Processing Units (RPUs) and CPU. Each antenna is connected to a receiver and transmitter circuitry, which involves: RF front-end, ADC/DAC and OFDM processing. For simplicity, only uplink is represented in this figure. We can identify some common tasks among these processing elements across different antennas, such as: time synchronization, automatic gain control, local oscillator generation, carrier frequency and sampling rate offset estimation, phase noise compensation, among others. Therefore, a few antennas (together with corresponding receivers/transmitters) can be grouped into one RPU for efficient implementation of such common tasks. However, for simplicity, in this work we only analyze the case where each RPU manages one antenna.

Dedicated physical links would easily exceed the number of I/O connections in current standards, in addition to the increment of the cost of adding a new RPUs when needed. To overcome this, we consider that RPUs are connected to the CPU node by a shared bus as shown in Fig. \ref{fig:BS_centralized}. 

Even though, this approach can support ZF detection (and precoding) from a functionality point of view, from the implementation point of view, it requires a very high inter-connection data-rate in the bus and at the input of the CPU ($R_\mathrm{c}$ in the figure). As an example, consider a 5G NR-based system with 128 antennas and OFDM as an access technology, then the average data-rate can be calculated as
\begin{equation}
R_{\mathrm{c}} = \frac{2w M N_{\mathrm{u}}}{T_{\mathrm{OFDM}}},
\label{eq:R_central}
\end{equation}
where $N_{\mathrm{u}}$ is the number of active subcarriers, $w$ is the bit-width for the baseband samples (real/imaginary parts) after FFT, and $T_{\mathrm{OFDM}}$ is the OFDM symbol duration. For $N_{\mathrm{u}}=3300$, $w=12$ and $T_{\mathrm{OFDM}}=1/120\mathrm{kHz}$ then $R_{\mathrm{c}}=1.2 \mathrm{Tbps}$. This result clearly exceed the limit data-rate for common interfaces, such as eCPRI \cite{ecpri} and PCIe, and furthermore, it is proportional to $M$, which clearly limits the scalability of the system.

As a solution to this limitation, we propose the fully-decentralized architecture for baseband detection and precoding shown in Figure \ref{fig:BS_decentralized}. We can observe that channel estimation and estimation/precoding have been moved from the CPU to the RPUs, with detection and decoding as a remaining task in the CPU from physical layer point of view. The benefit of this move is manifold. Firstly, the inter-connection data-rate scales with $K$ instead of $M$. Secondly, the high complexity requirement in the CPU for channel estimation and data estimation/precoding is now equally distributed among RPUs, which highly simplifies the implementation and overcomes the computational bottleneck and, additionally, CSI is obtained and consumed locally in each RPU without the need for exchange, with the consequent reduction in the required inter-connection data-rate. In addition to the advantages already mentioned, which are common to other decentralized schemes, the proposed architecture presented in this work achieves an unprecedented low inter-connection data-rate by the direct connection of RPUs forming a daisy-chain, where the CPU is at one of the ends.

In the daisy-chain, depicted in Fig. \ref{fig:BS_decentralized}, nodes are connected serially to each other by a dedicated connection. All elements in the chain work simultaneously in pipeline mode, processing and transmitting/receiving to/from the respective next/previous neighbor in the chain. The data is passed through the nodes sequentially, being updated at every RPU. There is an unique connection to the root node where the last estimate is transmitted and therefore been detected by the CPU. An important remark is the average inter-connection data-rate between nodes is the same regardless of the number of elements in the chain. This topology was proposed in \cite{argos} and further studied in \cite{jesus} and \cite{muris} with specific algorithms designed for this topology.

When the decentralized architecture in Fig. \ref{fig:BS_decentralized} needs to be deployed, antennas can be collocated in the same physical place or distributed over a large area. These antennas and therefore their corresponding RPUs can behave as nodes in the chain, whilst the CPU remains as the root node. There may be multiple chains in a network. The selection of the RPUs to form a chain may depend on the users they are serving. RPUs which serve the same set of users should be in the same chain, so they can work jointly to cancel IUI. This concept fits very well with the distributed wireless communication system \cite{DWCS}, the recent cell-free Massive MIMO concept \cite{cell-free} and the promising large intelligent surface \cite{lis}.

Decentralized architectures, such as the one shown in Fig. \ref{fig:BS_decentralized}, require other type of algorithms compared to Fig. \ref{fig:BS_centralized}. In the next section we introduce our proposed algorithm, which is a method for obtaining $\w_{m}$ and $\p_{m}$ as the equalization and precoding vectors, respectively.

\section{Coordinate Descent}
\label{section:CD}

Our proposed algorithm is an iterative algorithm based on the gradient descent (GD) optimization, in which the gradient information is approximated with a set of observations in every step. From this, each antenna can obtain its own equalization/precoding vector sequentially in a coordinate descent approach. The main advantage of this method is that it does not require access to all observations at each iteration, becoming an ideal choice for large scale distributed systems.

\subsection{Preliminaries}
From \eqref{eq:Eu} we know that in the non-IUI case, $\Eu$ is a diagonal matrix, which is the case when zero-forcing (ZF) is applied. In the general case, IUI is not zero and as consequence $\Eu$ contains non-zero entries outside the main diagonal.

The objective is to find a matrix $\Wbf$, which cancels IUI to a high extent ($\Eu \approx \I$), while fulfilling the following conditions:
\begin{itemize}
	\item Uses daisy-chain as a base topology, so we exploit the advantages seen in Section \ref{section:central_vs_decentral}.
	\item No exchange of CSI between nodes. Only local CSI. 
	\item Limited amount of data to pass between antenna nodes. It should depend on $K$ instead of $M$, to enable scalability.
	\item Limit the dependency on the central processing unit in order to reduce data transfer, processing and memory requirements of that unit. One consequence of this is to avoid matrix inversion in the central unit.
\end{itemize}

\subsection{Algorithm formulation}
The algorithm setup is that one intends to solve the unconstrained Least Squares (LS) problem in the uplink
\begin{equation} \label{eq:CD_R} \hatx = \argmin_{\mathbf{x}} \|\mathbf{y}-\mathbf{H}\mathbf{x}\|^2
\end{equation}
via a GD approach. The gradient of \eqref{eq:CD_R} equals $\nabla_{\mathbf{x}}=\Hbf^{H}\Hbf\mathbf{x}-\Hbf^{H}\mathbf{y}$.
Even though $\Hbf^{H}\Hbf$ and $\Hbf^{H}\mathbf{y}$ can be formulated in a decentralized way, the selection of $\mathbf{x}$ and the product with $\Hbf^{H}\Hbf$ is preferably done in a central processing unit to limit latency and inter-connection data-rates. Following the fully-decentralized approach and the intention to off-load the equalization/precoding computation from the CPU to the RPUs, we propose a different approach.

The proposed method can be derived as an approximate version of GD that can be operated in a decentralized architecture with minimum CPU intervention. It does so by computing, at each antenna, as much as possible of $\nabla_{\mathbf{x}}$ with the information available at the antenna. Then the estimate $\hatx$ is updated by using a scaled version of the "local" gradient and the antenna passes the updated estimate on to the next antenna.

The above described procedure can, formally, be stated as
\begin{equation}
\begin{split}
\varepsilon_m &= y_{m} - \h_{m}^{T} \hatx_{m-1} \\
\hatx_{m} &= \hatx_{m-1} + \mu_m \h_{m}^{*} \varepsilon_m,
\end{split}
\label{eq:CD_sm}
\end{equation}
for antenna $m$, where $\mu_m$ is a scalar step-size. The update rule in \eqref{eq:CD_sm} corresponds to the Kaczmarz method \cite{kaczmarz}, whose step-size is according to \cite{censor}
\begin{equation}
\mu_{m} = \frac{\mu}{\|\h_{m}\|^2},
\label{eq:mu_m}
\end{equation}
where $\mu \in \mathbb{R}$ is a relaxation parameter. In case of consistent systems, this is $\mathbf{y}=\Hbf \mathbf{x}$ (if SNR is high enough or there is no noise), $\mu=1$ is optimum and the method converge to the unique solution. Otherwise, when the system is inconsistent, $\mu$ give us an extra degree of freedom, which allows to outperform the $\mu=1$ case, as we will see in Section \ref{section:analysis}.

After $M$ iterations of \eqref{eq:CD_sm} we have
\begin{equation}
\begin{split}
\hatx_{M} &= \prod_{m=1}^{M} \left( \I_K - \mu_{m} \h_{m}^{*} \h_{m}^{T} \right) \hatx_0 \\
&+ \sum_{m=1}^{M} \prod_{i=m+1}^{M} \left(\I_K - \mu_{i} \h_{i}^{*} \h_{i}^{T} \right) \mu_{m} \h_{m}^{*} y_m.
\nonumber
\end{split}
\end{equation}
If we assume $\hatx_0 = \mathbf{0}_{K\times1}$ \footnote[1]{If prior information of $\mathbf{x}$ is available, it can be used here.}, then it is possible to express $\hatx_M$ as linear combination of $\mathbf{y}$, in the same way as \eqref{eq:linear_det}, and identify
$\w_m$ (the equalization vector associated to antenna $m$) as
\begin{equation}
\w_m = \left[ \prod_{i=m+1}^{M} \left(\I_K - \mu_{i} \h_{i} \h_{i}^{H} \right) \right] \mu_{m} \h_{m}.
\label{eq:CD_W}
\end{equation}
If \eqref{eq:CD_sm} is applied in reverse antenna order ($m=M \cdots 1$), then we obtain a different estimation. The expression for $\w_{m}$ when using the alternative approach is
\begin{equation}
\w_m = \mu_{m} \A_{m-1} \h_{m},
\label{eq:CD_W2}
\end{equation}
where matrix $\A_m$ is defined as
\begin{equation}
\A_m = \prod_{i=1}^{m} \left(\I_K - \mu_{i} \h_{i} \h_{i}^{H} \right).
\label{eq:CD_A_impl}
\end{equation}

It is important to remark that both approaches lead to different $\w_{m}$ sequences, however the overall performance should be the same if CSI in all antennas shows same statistical properties (stationarity across antennas).

\subsection{Algorithm design and pseudocode}
\label{section:alg}
In this subsection we derive an equivalent and more attractive form for the calculation of the weights of the algorithm in \eqref{eq:CD_W2} in an easy and low-complexity way, suitable for hardware implementation.

The algorithm description is shown in Algorithm \ref{algo:CD}. The vector $\w_{m}$ is computed in each antenna, while the matrix $\A_{m-1}$ gets updated according to the recursive rule: $\A_{m} = \A_{m-1} - \w_{m} \h_{m}^{H}$. Then, $\w_{m}$ is stored for the detection and precoding phase, and $\A_{m}$ is passed to the next antenna node for further processing.
\IncMargin{1em}
\begin{algorithm}[ht]
	\SetKwInOut{Input}{Input}
	\SetKwInOut{Output}{Output}
	\SetKwInOut{Preprocessing}{Preprocessing}
	\SetKwInOut{Init}{Init}
	\Input{ $\Hbf = \left[ \h_{1}, \h_{2} \cdots \h_{M} \right]^{T}$}
	\Preprocessing{}
	$\A_0 = \I_K$\\
	\For{$m = 1,2,...,M$}{
		$\w_m = \mu_{m} \A_{m-1} \h_m$\\
		$\A_m = \A_{m-1} - \w_{m} \h_{m}^{H}$
	}
	\caption{Proposed algorithm}
	\label{algo:CD}
	\Output{$\Wbf = \left[ \w_{1}, \w_{2} \cdots \w_{M} \right]^{T}$}
	
\end{algorithm}\DecMargin{1em}

From Algorithm \ref{algo:CD} we can observe that after $M$ steps we achieve the following expression: $\A_M = \I_{K} - \Eu^{*}$. Then, if perfect IUI cancellation is achieved, $\Eu=\I_{K}$ and therefore $\A_{M} = \mathbf{0}$. As a consequence we can take $\|\A_{m}\|^{2}$ as a metric for residual IUI. The interpretation of Algorithm \ref{algo:CD} is as follows.  $\|\A_{m}\|$ is reduced by subtracting from $\A_{m}$ a rank-1 approximation to itself. In order to achieve that, $\A_{m}$ is projected onto $\h_{m}$ to obtain $\w_{m}$, therefore $\w_{m} \h^{H}_m$ is the best rank-1 approximation to $\A_{m}$, having $\h_{m}$ as vector base. Ideally, if the channel is rich enough, vectors $\h_{m}$ are weakly correlated and assuming $M$ is large (Massive MIMO scenario) then IUI can be canceled out to a high extent \footnote[2]{The selection of Coordinate Descent as our method's name is because we consider the vectors $\{\w_i\}$ as the outcome of the method, and these can be seen as coordinates of a cost function to minimize. Such optimization problem can be written as: $\w_{m} = \argmin_{z} f(\w_{1},\cdots,\w_{m-1},\mathbf{z},\w_{m+1},\cdots,\w_{M})$, where $f = \|\A_{m-1} - \mathbf{z} \h_{m}^{H}\|_{F}^{2}$, and $\A_{m-1} = \I_{K}-\sum_{i \neq m} \w_{i} \h_{i}^{H}$. Each antenna solves this optimization problem in a sequential fashion, obtaining one coordinate as a result, while keeping the rest fixed. This is valid for single and multiple iterations to the array, which is presented in the next subsection.}.

The role of step-size $\mu$ is to control how much IUI is removed at every iteration. High values will tend to reduce IUI faster at the beginning when the amount to remove is high, but will lead to oscillating or unstable residual IUI after some iterations because the steps are too big, so the introduced error dominates. Low values for $\mu$ will ensure convergence of the algorithm and a relatively good IUI cancellation at the expense of a slower convergence.

\subsection{Multiple-iterations along the array}
\label{sub:multiple-iter}
Recalling from Section \ref{section:alg}, Algorithm \ref{algo:CD} reduces the norm of $\A$ at each step, providing as a result $\A_{M}$, which contains the residual IUI after the algorithm is run along the array. It is possible to expand the algorithm and apply $\A_{M}$ as initial value, $\A_{0}$ for a new iteration through the array, with the intention of decreasing even more the norm of $\A$. The pseudocode of the expanded version is shown in Algorithm \ref{algo:CD_multiple}, with $n_{iter}$ iterations, and as it can be seen, an increment of $\w_{m}$ is computed at each iteration. From topology point of view, it requires an extra connection between last and first RPUs, closing the daisy-chain and becoming a ring. It is expected to improve the performance at the expense of increasing the latency.
\IncMargin{1em}
\begin{algorithm}[ht]
	\SetKwInOut{Input}{Input}
	\SetKwInOut{Output}{Output}
	\SetKwInOut{Preprocessing}{Preprocessing}
	\SetKwInOut{Init}{Init}
	\Input{ $\Hbf = \left[ \h_{1}, \h_{2} \cdots \h_{M} \right]^{T}$}
	\Preprocessing{}
	$\A_{0,1} = \I_K$\\
	$\w_{m,1} = \mathbf{0},m=1,...,M$\\
	\For{$n = 1,2,...,n_{iter}$}{
		\For{$m = 1,2,...,M$}{
			$\w_{m,n} = \w_{m,n-1} + \mu_{m} \A_{m-1,n} \h_m$\\
			$\A_{m,n} = \A_{m-1,n} - \w_{m,n} \h_{m}^{H}$\\
		}
		$\A_{0,n+1} = \A_{M,n}$
	}
	\caption{Proposed algorithm multiple iterations}
	\label{algo:CD_multiple}
	\Output{$\Wbf = \left[ \w_{1,n_{iter}}, \w_{2,n_{iter}} \cdots \w_{M,n_{iter}} \right]^{T}$}
	
	\end{algorithm}\DecMargin{1em}

\section{Analysis}
\label{section:analysis}
In this section we present an analysis of the proposed solution. The main points are:
\begin{itemize}
	\item Performance analysis of the presented solution based on SIR, SINR and BER evaluation, and comparison with other methods. 
	\item Complexity and timing analysis, including computational complexity, inter-connection throughput, memory requirement and latency.
\end{itemize}

As was commented in the Introduction, the analysis presented in this section is quite general and not dependent on any specific hardware implementation. The idea is to provide high level guidelines on algorithm-hardware trade-offs, system parameter selections, and hardware architectures. A more specific analysis can be performed when one has decided the dedicated implementation strategy.

\subsection{Performance}
\label{section:performance}
In this subsection we obtain and present different metrics to evaluate and compare the performance of the proposed algorithm. The analysis we present is divided as follows: Derivation of SIR and SINR closed form expressions, bit-error-rate (BER) analysis of the proposed algorithm based on ideal and measured channels and comparison with other methods, such as MF and ZF. The performance analysis that follows is focused on uplink, but it can be extended to downlink.

\subsubsection{SIR \& SINR}
Specifically for user $k$, \eqref{eq:Eu} is reduced to
\begin{equation}
\hat{x}^{u}_{k} = E_{k,k} x^{u}_{k} + \sum_{i=1,i \neq k}^{K} E_{k,i} x^{u}_{i} + z_{k},
\nonumber
\end{equation}
where the first term represents the desired value to estimate (scaled version), the second one is the interference from other users and the third one is due to noise. The signal-to-interference ratio (SIR) for user $k$ is defined as
\begin{equation}
\text{SIR}_{k} = \frac{\E|E_{k,k}|^{2}}{ \E \left\lbrace \sum_{i=1,i \neq k}^{K} |E_{k,i}|^2 \right\rbrace}.
\label{eq:SIR}
\end{equation}

And for the signal-to-interference-and-noise ratio (SINR) we have
\begin{equation}
\text{SINR}_{k} = \frac{\E|E_{k,k}|^{2}}{\E \left\lbrace \sum_{i=1,i \neq k}^{K} |E_{k,i}|^2 \right\rbrace + \E|z_{k}|^2 }.
\label{eq:SINR}
\end{equation}

A list of parameters and their corresponding values are presented in Table \ref{table:parameters}, which are used in the following propositions.

\begin{table}[h!]
	\begin{center}
		\caption{Parameters}
		\label{table:parameters}
		\begin{tabular}{llr}
			\cline{1-2}
			Parameter & Description \\
			\hline
			$\alpha$ &  $1-\frac{2 \mu}{K} +\frac{\mu^2}{K(K+1)}$\\
			\hline
			$\beta$ & $\frac{\mu^2}{K(K+1)}$ \\
			\hline
			$\nu$ & $1 - \frac{\mu}{K}$  \\
			\hline
			$\epsilon$ & $1 - \frac{2\mu}{K} + \frac{\mu^2}{K}$
		\end{tabular}
	\end{center}
\end{table}

From \eqref{eq:SIR} it is possible to obtain a closed-form expression of the SIR as follows:
\begin{theorem}
	\label{prop:SIR}
	With perfect $\mathrm{CSI}$ and channel model as defined in Section \ref{section:background}, $\mathrm{SIR}$ per user in uplink with $\mathrm{CD}$ algorithm for estimation is
	\begin{equation}
	\mathrm{SIR} = \frac{1 - 2\nu^{M} +  \alpha^{M} \left(1-\frac{1}{K}\right) + \epsilon^M \frac{1}{K} }{\left(1-\frac{1}{K}\right) \cdot \left( \epsilon^{M} - \alpha^{M} \right)},
	\end{equation}
	which can be simplified in case of relatively large $M$, $K$, and $\frac{M}{K}$, which is the case of Massive MIMO, as
	\begin{equation}
	\mathrm{SIR} \approx e^{\mu(2-\mu)\frac{M}{K}}.
	\label{eq:SIR_approx}
	\end{equation}
\end{theorem}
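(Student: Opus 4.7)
The plan is to exploit the recursive identity $\A_m = \A_{m-1}\bigl(\I_K - \mu_m \h_m \h_m^H\bigr)$ that follows from substituting $\w_m = \mu_m \A_{m-1}\h_m$ into $\A_m = \A_{m-1} - \w_m\h_m^H$. Together with the observation already noted in the paper that $\A_M = \I_K - \Eu^{*}$, this gives $E_{k,i} = \delta_{ki} - (A_M)_{k,i}^{*}$, so
\begin{equation*}
\mathrm{SIR} = \frac{\E|1-(A_M)_{k,k}|^2}{\sum_{i\neq k}\E|(A_M)_{k,i}|^{2}}.
\end{equation*}
Thus the entire proof reduces to computing the first and second moments of the entries of the $k$-th row of $\A_M$.

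The first moment is immediate. Since $\h_m/\|\h_m\|$ is uniform on the complex unit sphere in $\mathbb{C}^{K}$, one has $\E\!\left[\h_m \h_m^H / \|\h_m\|^2\right] = \I_K/K$. Conditioning on $\A_{m-1}$ and using independence across antennas yields $\E[(A_m)_{k,j}\mid \A_{m-1}] = \nu\,(A_{m-1})_{k,j}$, hence $\E[(A_M)_{k,k}] = \nu^{M}$ and $\E[(A_M)_{k,i}]=0$ for $i\neq k$. This already produces the $1-2\nu^{M}$ term in the numerator.

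For the second moments I would introduce $s_m \triangleq \E|(A_m)_{k,k}|^{2}$, $u_m \triangleq \E|(A_m)_{k,i}|^{2}$ ($i\neq k$) and the row-norm auxiliary $V_m \triangleq \E\,\|(\A_m)_{k,:}\|^{2} = s_m + (K-1)u_m$. Expanding $|(A_m)_{k,j}|^{2}$ from the recursion and taking the conditional expectation leaves the moments $\E[h_l h_j^{*}/\|\h\|^{2}] = \delta_{lj}/K$ and the fourth-order moment $\E[h_a h_b^{*}h_c h_d^{*}/\|\h\|^{4}] = (\delta_{ab}\delta_{cd}+\delta_{ad}\delta_{bc})/(K(K+1))$, which is the standard complex-sphere identity. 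A careful bookkeeping turns the conditional expectation into $\alpha |(A_{m-1})_{k,j}|^{2} + \beta v_{k,m-1}$, and summing over $j$ shows that $V_m$ satisfies the decoupled recursion $V_m = (\alpha+K\beta)V_{m-1} = \epsilon\,V_{m-1}$, i.e.\ $V_m = \epsilon^{m}$. Substituting back yields the scalar recursions $s_m = \alpha s_{m-1} + \beta\epsilon^{m-1}$ and $u_m = \alpha u_{m-1} + \beta\epsilon^{m-1}$, whose closed-form solutions with initial conditions $s_0=1$, $u_0=0$ are
\begin{equation*}
s_M = \tfrac{K-1}{K}\alpha^{M} + \tfrac{1}{K}\epsilon^{M}, \qquad u_M = \tfrac{1}{K}\bigl(\epsilon^{M}-\alpha^{M}\bigr).
\end{equation*}
Plugging these into $\E|E_{k,k}|^2 = 1-2\nu^{M}+s_M$ and $\sum_{i\neq k}\E|E_{k,i}|^2 = (K-1)u_M$ delivers the exact SIR expression in the statement.

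For the large-$M,K,M/K$ simplification I would use $\log(1-x/K) \approx -x/K$ to obtain $\nu^{M}\approx e^{-\mu M/K}$, $\alpha^{M}\approx e^{-2\mu M/K}$, and $\epsilon^{M}\approx e^{-\mu(2-\mu)M/K}$. In the Massive MIMO regime the numerator tends to $1$ while the denominator is dominated by the $\epsilon^{M}$ term because $\mu(2-\mu) < 2\mu$ for $\mu\in(0,2)$, giving $\mathrm{SIR} \approx e^{\mu(2-\mu)M/K}$. The main obstacle is the accounting in the second-moment recursion: both the $\beta$ cross term that couples $s_m$ to $V_m$ and the simplification $\alpha + K\beta = \epsilon$ must be tracked carefully; once the auxiliary row-norm recursion $V_m = \epsilon V_{m-1}$ is spotted, everything else is a routine two-line linear ODE.
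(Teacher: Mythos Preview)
Your proof is correct and follows essentially the same route as the paper. The only difference is cosmetic: the paper packages the second-moment computation at the matrix level, proving $\E\{\Q\D\Q^{H}\}=\alpha\D+\beta\Tr(\D)\I$ for diagonal $\D$ and then iterating to obtain $\E\{\A_m\D\A_m^{H}\}=\alpha^{m}(\D-\D_a)+\epsilon^{m}\D_a$ with $\D_a=\Tr(\D)\I/K$, after which the choices $\D=\I$ and $\D=\e_k\e_k^{T}$ yield your $V_m=\epsilon^{m}$ and $s_M$ respectively; you instead carry out the identical recursion entry-wise via $s_m,u_m,V_m$, using the same sphere fourth-moment identity and the same key simplification $\alpha+K\beta=\epsilon$.
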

\begin{proof}
	See Appendix-\ref{proof:SIR}.
\end{proof}
\begin{figure*}\centering
	\subfloat[SINR vs $\mu$ under different SNR. M=128 and K=16.]{
		\includegraphics[width=0.48\textwidth]{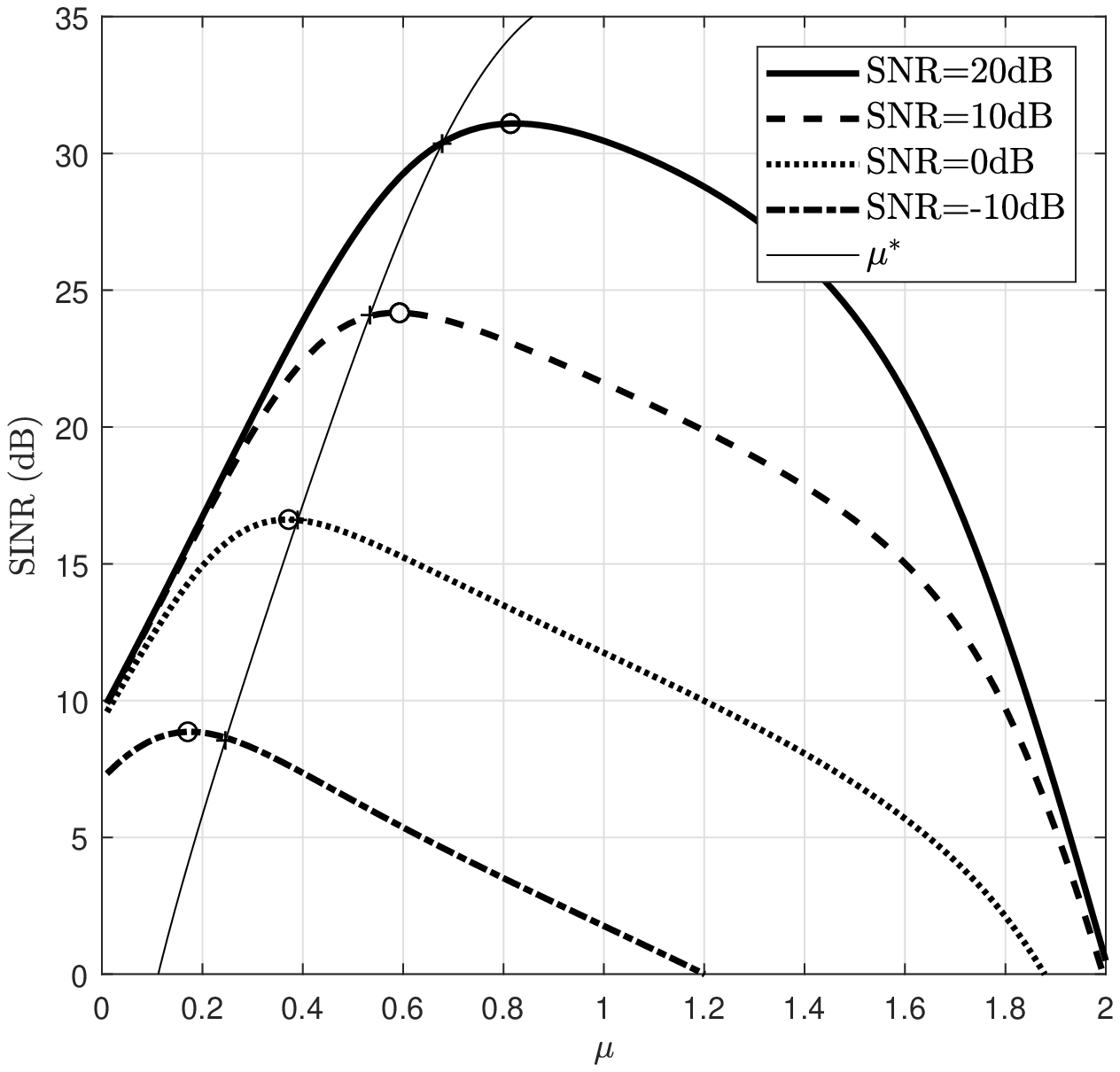}
		\label{fig:SINR_vs_mu_M128_K16}
	}
	\subfloat[SINR vs $\mu$ under different channels. M=128 and K=5. SNR=0dB.]{
		\includegraphics[width=0.48\textwidth]{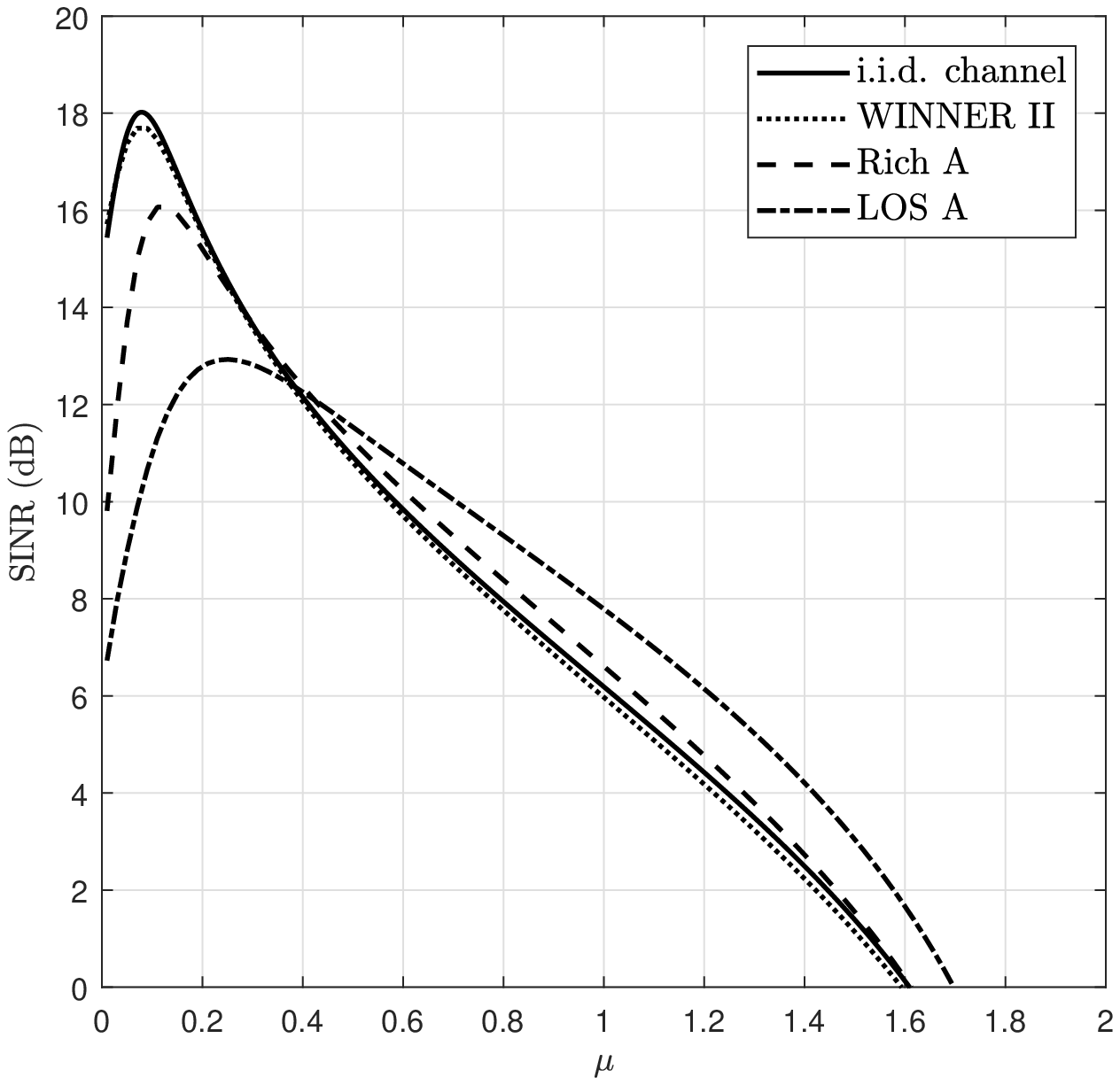}
		\label{fig:SINR_vs_mu_sim}
	}\\[-2ex]
	\caption{ }
	\label{fig:SINR_vs_mu}
	\vspace*{-4mm}
\end{figure*}

The maximum value of \eqref{eq:SIR_approx} is achieved for $\mu=1$ and the SIR value only depends on the ratio $\frac{M}{K}$ in an exponential fashion, showing how fast the IUI is canceled as $M$ grows, and therefore ZF is approached. As an example, for a target value of SIR = 40dB, $\frac{M}{K}=10$ meets the requirement, which is a typical ratio in Massive MIMO regime.

Regarding SINR, it can be derived based on previous results as
\begin{theorem}
	\label{prop:SINR}
	With perfect $\mathrm{CSI}$ and channel model as defined in Section \ref{section:background}, $\mathrm{SINR}$ per user in uplink with $\mathrm{CD}$ algorithm for estimation is given by
	\begin{equation}
	\begin{split}
	\mathrm{SINR} = \frac{1 - 2\nu^{M} +  \alpha^{M} \left(1-\frac{1}{K}\right) + \epsilon^M \frac{1}{K} }{\left(1-\frac{1}{K}\right) \left(\epsilon^{M} - \alpha^{M} \right) + \frac{N_{0}}{K-1} \left( \frac{\mu}{2-\mu}\right) (1-\epsilon^{M})},
	\end{split}
	\label{eq:SINR_CD}
	\end{equation}
	which can be simplified in case of relatively large $M$, $K$, and $\frac{M}{K}$, which is the case of Massive MIMO, as
	\begin{equation}
	\mathrm{SINR} \approx \left[ e^{-\mu(2-\mu)\frac{M}{K}} + \frac{1}{K \cdot \mathrm{SNR}} \left( \frac{\mu}{2-\mu}\right) \right]^{-1}.
	\label{eq:SINR_CD_limit}
	\end{equation}
\end{theorem}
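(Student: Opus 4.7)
The numerator of the SINR is identical to that of the SIR in Proposition 1, and the interference term in the denominator is likewise unchanged. So the only new object to compute is the noise power $\E|z_k|^2$, where $\mathbf{z}^{\mathrm{u}}=\mathbf{W}^H\mathbf{n}^{\mathrm{u}}$. Since $\mathbf{n}^{\mathrm{u}}\sim\mathcal{CN}(\mathbf{0},N_0\mathbf{I}_M)$ is independent of $\mathbf{W}$, and since by the symmetry of the i.i.d.\ Rayleigh model the law of the noise power is the same for every user, I would write
\begin{equation}
\E|z_k|^{2}=N_{0}\,\E\!\left[\sum_{m=1}^{M}|w_{m,k}|^{2}\right]=\frac{N_{0}}{K}\,\E\|\mathbf{W}\|_{F}^{2}=\frac{N_{0}}{K}\sum_{m=1}^{M}\E\|\mathbf{w}_{m}\|^{2}.\nonumber
\end{equation}
Thus the task reduces to evaluating $\E\|\mathbf{w}_m\|^2$ for each $m$ and summing.

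The plan is to exploit the update rule $\mathbf{w}_m=\mu_m\mathbf{A}_{m-1}\mathbf{h}_m$ with $\mu_m=\mu/\|\mathbf{h}_m\|^2$, which yields
\begin{equation}
\|\mathbf{w}_{m}\|^{2}=\mu^{2}\,\frac{\mathbf{h}_{m}^{H}\mathbf{A}_{m-1}^{H}\mathbf{A}_{m-1}\mathbf{h}_{m}}{\|\mathbf{h}_{m}\|^{4}}.\nonumber
\end{equation}
Because $\mathbf{A}_{m-1}$ depends only on $\mathbf{h}_{1},\dots,\mathbf{h}_{m-1}$, it is independent of $\mathbf{h}_m$. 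Writing $\mathbf{h}_m=\sqrt{r}\,\mathbf{u}$ with $r=\|\mathbf{h}_m\|^2\sim\mathrm{Gamma}(K,1)$ and $\mathbf{u}$ uniform on the complex unit sphere, I would use $\E[\mathbf{u}^{H}\mathbf{B}\mathbf{u}]=\mathrm{tr}(\mathbf{B})/K$ together with $\E[1/r]=1/(K-1)$ to conclude $\E\|\mathbf{w}_{m}\|^{2}=\mu^{2}\,\E\|\mathbf{A}_{m-1}\|_{F}^{2}/(K(K-1))$. The remaining ingredient is a simple scalar recursion for $\E\|\mathbf{A}_m\|_F^2$: from $\mathbf{A}_m=\mathbf{A}_{m-1}(\mathbf{I}_K-\mu_m\mathbf{h}_m\mathbf{h}_m^H)$ (equivalently from the algorithm's rank-one subtraction and the identities used already in Proposition 1), applying the same isotropy argument gives
\begin{equation}
\E\|\mathbf{A}_{m}\|_{F}^{2}=\left(1-\frac{2\mu-\mu^{2}}{K}\right)\E\|\mathbf{A}_{m-1}\|_{F}^{2}=K\,\epsilon^{m},\nonumber
\end{equation}
with $\epsilon$ as defined in Table~\ref{table:parameters}. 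Substituting and summing the geometric series yields $\E\|\mathbf{W}\|_F^2=\frac{\mu K(1-\epsilon^{M})}{(K-1)(2-\mu)}$, hence $\E|z_k|^{2}=\frac{N_{0}}{K-1}\,\frac{\mu}{2-\mu}\,(1-\epsilon^{M})$, which is exactly the extra term appearing in the denominator of \eqref{eq:SINR_CD}.

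For the asymptotic form \eqref{eq:SINR_CD_limit} I would write $\mathrm{SINR}^{-1}=\mathrm{SIR}^{-1}+\E|z_k|^2/\E|E_{k,k}|^2$, invoke the large-$M/K$ approximation of Proposition 1 to get $\mathrm{SIR}^{-1}\approx e^{-\mu(2-\mu)M/K}$, note $\E|E_{k,k}|^2\to 1$ and $1-\epsilon^M\to 1$ in the Massive MIMO regime, and approximate $K-1\approx K$, so the noise contribution collapses to $\frac{1}{K\,\mathrm{SNR}}\,\frac{\mu}{2-\mu}$. The main technical obstacle is verifying the isotropy/independence argument for the recursion on $\E\|\mathbf{A}_m\|_F^2$ cleanly, which is mild here because conditioning on $\mathbf{A}_{m-1}$ makes $\mathbf{h}_m$'s direction genuinely uniform on the sphere and independent of its magnitude; the trickier bookkeeping already lives in Proposition 1, which I take as given.
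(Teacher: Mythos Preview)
Your proof is correct and follows essentially the same route as the paper: both isolate the noise term $\E|z_k|^2$, reduce it via the independence of $\h_m$ from $\A_{m-1}$ and the identity $\E\{\h\h^{H}/\|\h\|^{4}\}=\frac{1}{K(K-1)}\I$ to a sum involving $\E\{\A_{m-1}\A_{m-1}^{H}\}$, and then collapse the resulting geometric series in $\epsilon$. The only cosmetic difference is that the paper reads off the $(k,k)$ entry from the full matrix identity $\E\{\A_{m}\A_{m}^{H}\}=\epsilon^{m}\I$ established in the appendix, whereas you invoke user symmetry to pass to the Frobenius norm and obtain the scalar recursion $\E\|\A_{m}\|_{F}^{2}=\epsilon\,\E\|\A_{m-1}\|_{F}^{2}$ directly---a slightly more economical derivation of the same quantity.
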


\begin{proof}
	See Appendix-\ref{proof:SINR}.
\end{proof}

The first term in \eqref{eq:SINR_CD_limit} represents SIR containing IUI information, while the second one takes into account the post-equalized noise power. For high SNR, the first term is dominant and $\mathrm{SINR} \to e^{\mu(2-\mu)\frac{M}{K}}$, which depends on $\frac{M}{K}$ and $\mu$, but not on $\mathrm{SNR}$. On the other hand, when SNR is low, the second term is dominant and $\mathrm{SINR} \to \mathrm{SNR} \cdot K \left(\frac{2 - \mu}{\mu}\right)$ as $M$ grows, which grows linearly with $\mathrm{SNR}$ and $K$ (up to certain value). This linear dependency on $K$ is due to the post-equalization noise is equally distributed among the users. While the noise power per antenna remains constant, the portion assigned to each user decays as $K$ grows, so the SINR per user grows linearly. However, as $K$ increases the IUI does so (first term in \eqref{eq:SINR_CD_limit} grows), and both effects cancel out at some point, being IUI dominant afterwards, with the corresponding decay of SINR.

The optimal value of $\mu$, denoted as $\mu^{*}$, depends on $M$, $K$, and the specific channel. For the i.i.d. case, defined in Section \ref{section:background}, it is possible to obtain $\mu^{*}$ by numerical optimization over \eqref{eq:SINR_CD}. An approximate value, denoted as $\mu_{0}$, is presented as follows.
\begin{theorem}
	\label{prop:mu_init}
	A recommended value for $\mu_{0}$, in the vicinity of $\mu^{*}$, under CD and i.i.d. channel as defined in Section \ref{section:background}, is given by
	\begin{equation}
	\mu_{0} = \frac{1}{2} \frac{K}{M} \log (4 M \cdot \mathrm{SNR} ).
	\label{eq:mu_init}
	\end{equation}
\end{theorem}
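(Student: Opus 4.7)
The plan is to optimize the simplified SINR expression \eqref{eq:SINR_CD_limit} from Proposition \ref{prop:SINR}, which gives
\[
\mathrm{SINR}(\mu) \approx \Bigl[ e^{-\mu(2-\mu)\frac{M}{K}} + \frac{1}{K\,\mathrm{SNR}}\,\frac{\mu}{2-\mu} \Bigr]^{-1}.
\]
Maximizing SINR is equivalent to minimizing the bracketed quantity $f(\mu)$. I would compute
\[
f'(\mu) = -2(1-\mu)\frac{M}{K}\,e^{-\mu(2-\mu)\frac{M}{K}} + \frac{1}{K\,\mathrm{SNR}}\cdot\frac{2}{(2-\mu)^{2}},
\]
and set $f'(\mu)=0$ to obtain the first-order optimality condition
\[
(1-\mu)\frac{M}{K}\,(2-\mu)^{2}\,e^{-\mu(2-\mu)\frac{M}{K}} = \frac{1}{K\,\mathrm{SNR}}.
\]

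This transcendental equation does not admit a closed form, so the next step is to argue that in the Massive MIMO regime of interest the optimizer $\mu^{*}$ is small, and replace the slowly varying factors by their values at $\mu=0$. Concretely, I would use $1-\mu\approx 1$ and $(2-\mu)^{2}\approx 4$ inside the equation, which is justified a posteriori since the resulting $\mu_{0}$ scales like $\tfrac{K}{M}\log(M\,\mathrm{SNR})$ and is therefore $\ll 1$ whenever $M/K$ is large. Similarly, in the exponent I would retain the dominant linear term $\mu(2-\mu)\approx 2\mu$, so that the equation collapses to
\[
\frac{4M}{K}\,e^{-2\mu\frac{M}{K}} \approx \frac{1}{K\,\mathrm{SNR}},
\]
equivalently $e^{-2\mu M/K} \approx (4M\,\mathrm{SNR})^{-1}$. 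Taking logarithms and solving for $\mu$ yields exactly
\[
\mu_{0} = \frac{1}{2}\,\frac{K}{M}\,\log(4M\,\mathrm{SNR}),
\]
which is the claimed expression.

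The main obstacle, and the place where the proposition is stated as a recommendation rather than an exact identity, is controlling the quality of the two approximations just invoked: dropping $(1-\mu)$ and $(2-\mu)^2/4$ and linearizing $\mu(2-\mu)$ in the exponent. I would address this by verifying self-consistency, namely substituting $\mu_{0}$ back and checking that the neglected terms are lower order in $K/M$ and in $1/\log(M\,\mathrm{SNR})$, so that $\mu_0$ sits in a neighborhood of the true optimizer $\mu^{*}$ where $f$ is essentially flat (its second derivative at the minimum being $O(M/K)$, far larger than the correction to $\mu_0$). This qualitative argument matches the wording of the proposition, which only claims that $\mu_0$ lies in the vicinity of $\mu^{*}$ and is intended as a practical, closed-form prescription rather than the exact numerical optimizer of \eqref{eq:SINR_CD}.
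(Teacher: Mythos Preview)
Your proof is correct and follows exactly the paper's approach: differentiate the approximate $\mathrm{SINR}^{-1}$ from \eqref{eq:SINR_CD_limit}, set the derivative to zero, and then simplify the resulting transcendental equation via the small-$\mu$ approximations $(1-\mu)\approx 1$, $(2-\mu)^{2}\approx 4$, and $\mu(2-\mu)\approx 2\mu$ to obtain $4M\cdot\mathrm{SNR}=e^{2\mu M/K}$. In fact you spell out and justify these simplifications more carefully than the paper does, which merely states that the first-order condition ``can be further simplified'' to that equation without naming the approximations.
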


\begin{proof}
	See Appendix-\ref{proof:mu_init}.
\end{proof}

As a side result, from the analysis performed in this section, we can extract interesting properties of the matrix $\Wbf$, such the following one:
\begin{theorem}
	\label{prop:W_power}
	The equalization matrix $\Wbf$ as result of $\mathrm{CD}$ algorithm satisfies the next properties for $\mu \in [0,2)$
	\begin{equation}
	\E \| \Wbf \|^{2}_{F} = \frac{K}{K-1} \cdot \frac{\mu}{2-\mu} \cdot \left( 1-\epsilon^{M} \right).
	\label{eq:W_power}
	\end{equation}
	\nonumber
\end{theorem}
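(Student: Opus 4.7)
The plan is to reduce $\E\|\Wbf\|_F^2 = \sum_{m=1}^M \E\|\w_m\|^2$ to a one-line recursion for the scalar $T_m := \E\|\A_m\|_F^2$. Each per-antenna term is evaluated by conditioning on $\A_{m-1}$ and exploiting the isotropy of the fresh channel vector $\h_m$, which is independent of $\A_{m-1}$ since the latter depends only on $\h_1,\ldots,\h_{m-1}$.

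First I write $\|\w_m\|^2$ explicitly via $\w_m = (\mu/\|\h_m\|^2)\A_{m-1}\h_m$ and decompose $\h_m = \|\h_m\|\mathbf{u}_m$, giving
\[
\|\w_m\|^2 = \frac{\mu^2}{\|\h_m\|^2}\,\mathbf{u}_m^H \A_{m-1}^H\A_{m-1}\mathbf{u}_m.
\]
For $\h_m\sim\mathcal{CN}(0,\I_K)$ the direction $\mathbf{u}_m$ is uniform on the complex unit sphere and independent of the radius. The two scalar facts $\E[\mathbf{u}_m\mathbf{u}_m^H] = \I_K/K$ and $\E[1/\|\h_m\|^2] = 1/(K-1)$ (the latter being a Gamma moment of $\|\h_m\|^2\sim\mathrm{Gamma}(K,1)$) then yield
\[
\E\|\w_m\|^2 = \frac{\mu^2}{K(K-1)}\,T_{m-1}.
\]

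Next I derive the recursion for $T_m$. Expanding $\A_m\A_m^H$ from $\A_m = \A_{m-1}-\w_m\h_m^H$ and using the algebraic identity $\A_{m-1}\h_m = (\|\h_m\|^2/\mu)\,\w_m$ to collapse the two cross terms produces
\[
\A_m\A_m^H = \A_{m-1}\A_{m-1}^H - \frac{2-\mu}{\mu}\,\|\h_m\|^2\,\w_m\w_m^H.
\]
Taking the trace and applying the same radius/direction split to evaluate $\E[\|\h_m\|^2\|\w_m\|^2] = (\mu^2/K)\,T_{m-1}$ gives $T_m = \epsilon\,T_{m-1}$ with $\epsilon$ as tabulated in Table~\ref{table:parameters}, hence $T_m = K\epsilon^m$ from $T_0 = K$. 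Summing the resulting geometric series and using $1-\epsilon = \mu(2-\mu)/K$ yields
\[
\E\|\Wbf\|_F^2 = \frac{\mu^2}{K-1}\sum_{m=1}^M \epsilon^{m-1} = \frac{K}{K-1}\cdot\frac{\mu}{2-\mu}\,(1-\epsilon^M),
\]
which is the claim.

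The main obstacle is the apparent coupling between $\A_{m-1}$ and $\h_m$ inside the ratio-type quadratic forms $\|\w_m\|^2$ and $\|\h_m\|^2\|\w_m\|^2$. The decisive move is the radius/direction factorization of the complex Gaussian vector, which turns each conditional expectation into the product of an elementary scalar moment and a trace against an isotropic outer product, reducing everything to a one-dimensional linear recursion on $T_m$ rather than a much harder recursion on $\E[\A_m^H\A_m]$ as a full matrix.
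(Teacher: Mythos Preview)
Your proof is correct and follows essentially the same high-level route as the paper: write $\E\|\Wbf\|_F^2=\sum_m\E\|\w_m\|^2$, reduce each term to $\frac{\mu^2}{K(K-1)}\,\E\|\A_{m-1}\|_F^2$ via the independence of $\h_m$ and $\A_{m-1}$, and then use $\E\|\A_{m-1}\|_F^2=K\epsilon^{m-1}$ to sum the geometric series.

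The one genuine difference is in how the recursion $T_m=\epsilon\,T_{m-1}$ is obtained. The paper routes through its Propositions~\ref{prop:E{QDQ}} and~\ref{prop:E{ADA}}, which establish the full matrix-valued identity $\E\{\A_m\A_m^H\}=\epsilon^m\I$ by computing $\E\{\Q\D\Q^H\}$ for arbitrary diagonal $\D$ (needed elsewhere for the SIR/SINR proofs). You instead expand $\A_m\A_m^H$ directly using the algebraic identity $\A_{m-1}\h_m=(\|\h_m\|^2/\mu)\w_m$, which collapses the cross terms and yields the scalar trace recursion without ever touching the off-diagonal structure. Your route is more elementary and self-contained for this particular statement; the paper's route is the natural one given that the matrix-level result was already in hand from the SINR analysis.
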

\begin{proof}
	See Appendix-\ref{proof:W_power}.
\end{proof}

This result is relevant in downlink, where a transmission power budget is needed. Expression in \eqref{eq:W_power} is a monotonically growing function of $\mu$. It can be shown that total transmitted mean power is bounded by $4\frac{M}{K}$, reaching this value at $\mu=2$. However, as we will see in next section, optimal $\mu$ for i.i.d. Gaussian channel is within the range $(0,1]$, therefore for a large enough $K$, we have $\E \| \Wbf \|^{2}_{F} \leq 1$, which does not depend on $M$, therefore ensure the scalability of the proposed solution.\\

Expression \eqref{eq:SINR_CD} is plotted in Figure \ref{fig:SINR_vs_mu_M128_K16} showing SINR vs $\mu$ for CD under different SNR values and step-size according to \eqref{eq:mu_m}. As expected, optimal $\mu$ approaches 1 as SNR grows. Simulation results shows a good match with \eqref{eq:SINR_CD}. The curve with $\mu_{0}$ values obtained from \eqref{eq:mu_init} is also plotted for a wide range of SNR. It is observed how the $\mu_{0}$ value is reasonably close to the optimum for the SNR range depicted. Furthermore, the result is much closer to ZF than MRC values, which are $\{40.5, 30.5, 20.5, 10.5\}$dB and $\{9.0, 9.0, 8.8, 6.8\}$dB respectively for the different SNR values used in the figure.

Figure \ref{fig:SINR_vs_mu_sim} shows simulation results for the CD algorithm performance under different channels. For some of them we use a model (i.i.d and WINNER II) and others are based on real measurements (Rich A and LOS A). For this comparison we use different $\frac{M}{K}$ ratio and the step-size according to \eqref{eq:mu_m}. Rich A is non-line-of-sight (NLOS) channel, rich in scatters, while LOS A is predominantly line-of-sight (LOS) channel. WINNER II is obtained from a NLOS scenario with a uniform linear array at the BS, with M elements separated by $\lambda$/2. Users are randomly located in a 300m$\times$300m area, with the BS at the center. It is noticed how rich channels (i.i.d and WINNER II) provide better performance. SINR levels reached by ZF are \{20.9, 20.9, 19.8, 17.6\}dB and for MRC they are \{14.3, 15.2, 7.8, 4.8\}dB, in both cases for the i.i.d., WINNER II, Rich A and LOS A channels, respectively. It is also noticed that CD performance lies in between ZF and MRC for these scenarios.

Figure $\ref{fig:SINR_vs_M_over_K_SNR0}$ shows SINR versus $\frac{M}{K}$ for $M=128$ and SNR = 0dB. SINR for CD is shown comparing the effect of using  $\mu^{*}$ and $\mu_{0}$ according to \eqref{eq:mu_init}. We observe that $\frac{M}{K} \approx 10$ (equivalent to $K\approx12$) is the preferred working point, where SINR reaches the maximum value and $\mu_{0}$ gives the same result as $\mu^{*}$. We also compare the performance with ZF and MRC algorithms.

As presented in Subsection \ref{sub:multiple-iter}, the algorithm can be extended to perform multiple iterations through the array, in order to increase the performance. Figure $\ref{fig:SINR_vs_mu_sim_num_iter}$ shows SINR versus $\mu$ for a different number of iterations through the array together with ZF for comparison. From the figure we can notice that the maximum SINR increases after each iteration, approaching to ZF. It is also relevant to note that $\mu^{*}$ changes with the number of iterations. 

\begin{figure}[t]\centering
	\includegraphics[width=1\linewidth]{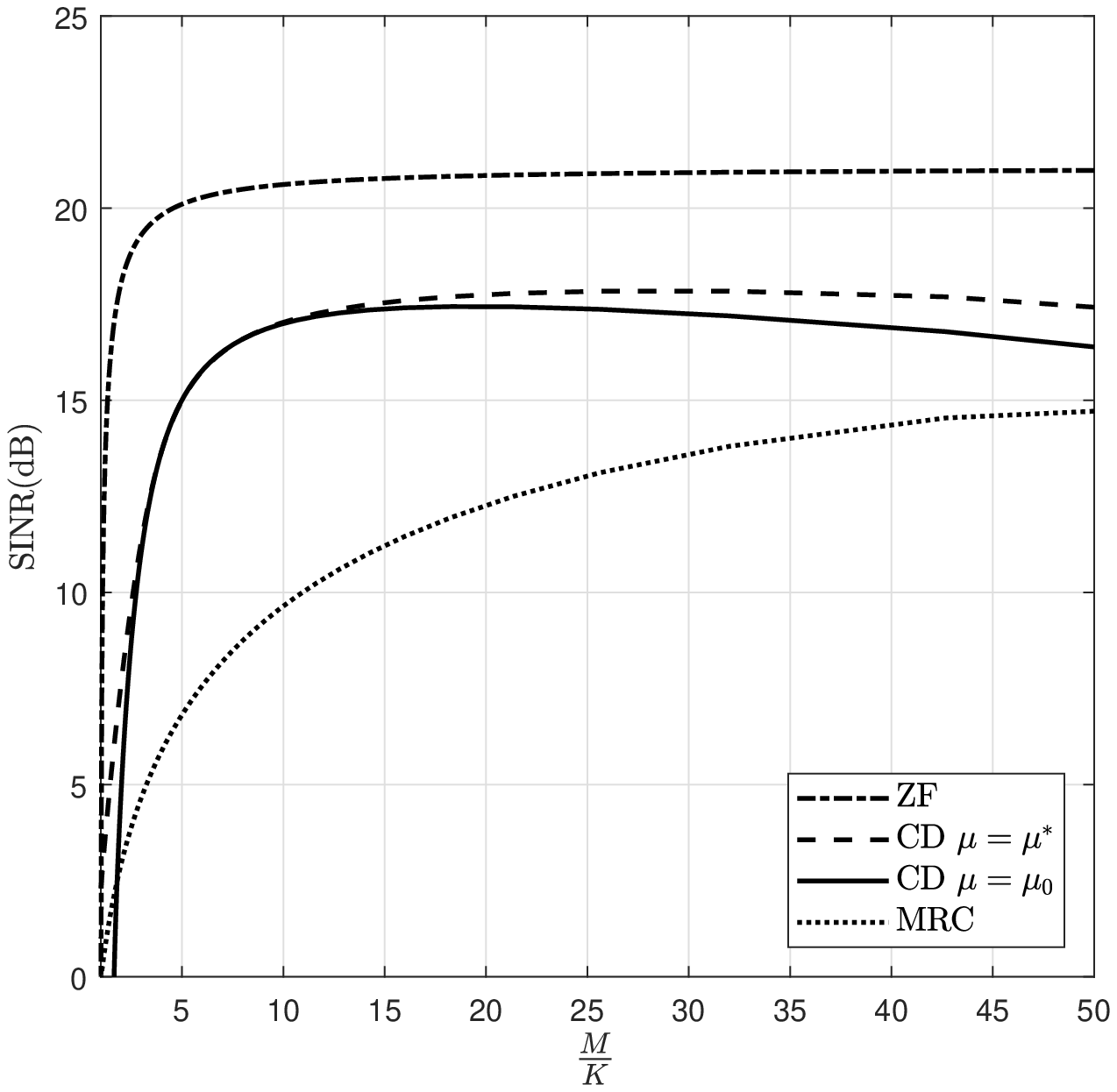}
	\vspace*{-4mm}
	\caption{SINR (dB) versus $\frac{M}{K}$ for SNR=0dB and M=128. CD SINR is plotted in the case of $\mu^{*}$ (dashed) and $\mu_{0}$ (solid) are used. i.i.d. channel. SNR = 0dB.}
	\label{fig:SINR_vs_M_over_K_SNR0}
\end{figure}

\begin{figure}\centering
	\includegraphics[width=1\linewidth]{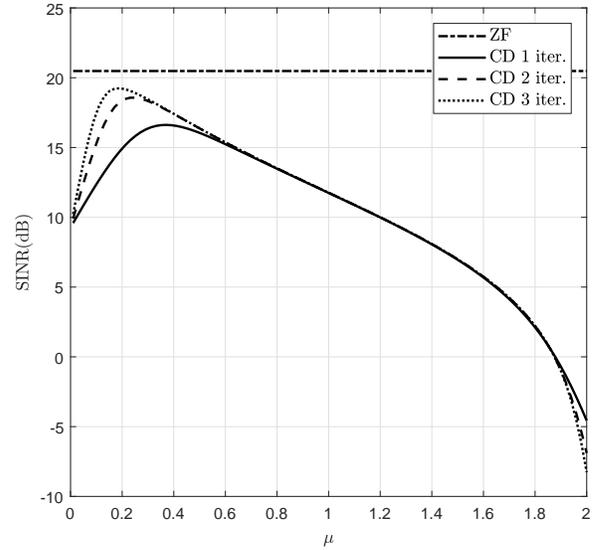}
	\vspace*{-4mm}
	\caption{SINR vs. SNR for $M$=128, $K=16$. 16QAM. i.i.d. channel. SNR=0dB. SINR after a certain number of iterations through the array. ZF added for comparison.}
	\label{fig:SINR_vs_mu_sim_num_iter}
\end{figure}

\subsubsection{BER}
BER versus SNR is shown in Figure \ref{fig:BER_vs_SNR} under i.i.d. channel for three different methods: CD, ZF and MRC. CD is shown using two different values for $\mu$: 1 and $\mu^*$. It is noticeable the great impact of the selected $\mu$ and therefore the importance of selecting an appropriate value.

The effect of non-ideal CSI in the BER is shown in Figure \ref{fig:BER_vs_SNR_non-ideal-CSI} for ZF and CD (for $\mu^{*}$). The non-ideal CSI is modeled as an ideal-CSI with a noise contribution (complex normal distributed) with a variance equal to $N_{0}$, therefore it depends inversely on SNR. No boosting in pilots is used. As it can be observed, for SNR$<$0dB the SNR gap is very small and increases as long as SNR increases too, in a similar fashion as the ideal CSI case. For SNR$>$0 the SNR gap in both cases is similar.

\begin{figure}\centering
	\includegraphics[width=1\linewidth]{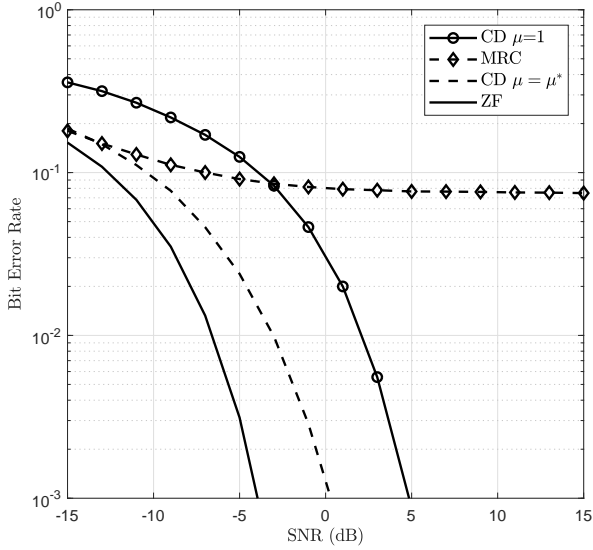}
	\vspace*{-4mm}
	\caption{BER vs. SNR for $M$=128, $K=16$. 16QAM. i.i.d. channel.}
	\label{fig:BER_vs_SNR}
\end{figure}

\begin{figure}\centering
	\includegraphics[width=1\linewidth]{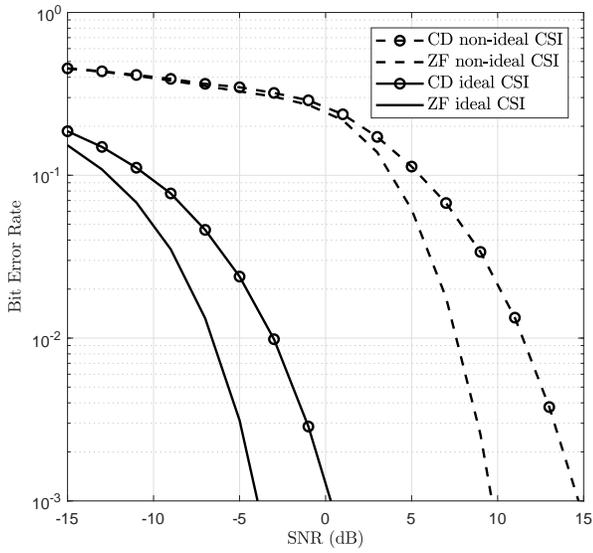}
	\vspace*{-4mm}
	\caption{BER vs. SNR for $M$=128, $K=16$. 16QAM. i.i.d. channel. Comparison between ideal and non-ideal CSI.}
	\label{fig:BER_vs_SNR_non-ideal-CSI}
\end{figure}

\subsection{Complexity \& Timing}
In this subsection we analyze the complexity of the proposed solution from three different domains: computational complexity (data processing), inter-connection throughput (data movement) and memory (data storage). Timing in the form of total system latency is also analyzed.

For this analysis we assume a frame structure based on OFDM, which contains one dedicated OFDM symbol per frame for channel estimation based on orthogonal pilots, so each one is dedicated to one of the users in a consecutive way. The other symbols convey users' data. Under the TDD assumption, some of them are used for DL and others for UL. We also assume that all RPUs perform IFFT/FFT in parallel with an output data-rate of $\frac{N_{\mathrm{u}}}{T_{\mathrm{OFDM}}}$.

\begin{figure*}[ht]
	\footnotesize
	\centering
	\psfrag{P1}{$P_1$}
	\psfrag{P2}{$P_2$}
	\psfrag{P3}{$P_3$}
	\psfrag{PN}{$P_N$}
	\psfrag{D1}{$D_1$}
	\psfrag{D2}{$D_2$}
	\psfrag{D3}{$D_3$}
	\psfrag{DN}{$D_N$}
	\psfrag{M1}{$M_1$}
	\psfrag{M2}{$M_2$}
	\psfrag{M3}{$M_3$}
	\psfrag{MN}{$M_N$}
	\psfrag{C1}{$C_1$}
	\psfrag{C2}{$C_2$}
	\psfrag{C3}{$C_3$}
	\psfrag{CN}{$C_N$}
	\psfrag{W11}{$w_{1}^{(1)}$}
	\psfrag{W12}{$w_{1}^{(2)}$}
	\psfrag{W13}{$w_{1}^{(3)}$}
	\psfrag{W1N}{$w_{1}^{(N)}$}
	\psfrag{W21}{$w_{2}^{(1)}$}
	\psfrag{W22}{$w_{2}^{(2)}$}
	\psfrag{W23}{$w_{2}^{(3)}$}
	\psfrag{W2N}{$w_{2}^{(N)}$}
	\psfrag{WM1}{$w_{M}^{(1)}$}
	\psfrag{WM2}{$w_{M}^{(2)}$}
	\psfrag{WM3}{$w_{M}^{(3)}$}
	\psfrag{WMN}{$w_{M}^{(N)}$}
	\psfrag{M1}{$M_1$}
	\psfrag{M2}{$M_2$}
	\psfrag{M3}{$M_3$}
	\psfrag{MN}{$M_N$}
	\psfrag{ant1}{$1$}
	\psfrag{ant2}{$2$}
	\psfrag{ant3}{$3$}
	\psfrag{antM}{$M$}
	\psfrag{OFDM1}{$\mathrm{OFDM} 1$}
	\psfrag{OFDM2}{$\mathrm{OFDM} 2$}
	\psfrag{OFDM3}{$\mathrm{OFDM} 3$}
	\psfrag{OFDML}{$\mathrm{OFDM} L$}
	\psfrag{A1}{$\mathbf{A}_{1}^{(n)}$}
	\psfrag{A2}{$\mathbf{A}_{2}^{(n)}$}
	\psfrag{A3}{$\mathbf{A}_{3}^{(n)}$}
	\psfrag{AM1}{$\mathbf{A}_{M-1}^{(n)}$}
	\psfrag{A11}{$\mathbf{A}_{1}^{(1)}$}
	\psfrag{A1N}{$\mathbf{A}_{1}^{(N)}$}
	\psfrag{A21}{$\mathbf{A}_{2}^{(1)}$}	
	\psfrag{A2N}{$\mathbf{A}_{2}^{(N)}$}
	\psfrag{TS}{$\cdots$}
	\psfrag{TPRB}{$T_{\mathrm{PRB}}$}
	\includegraphics[width=1.0\linewidth]{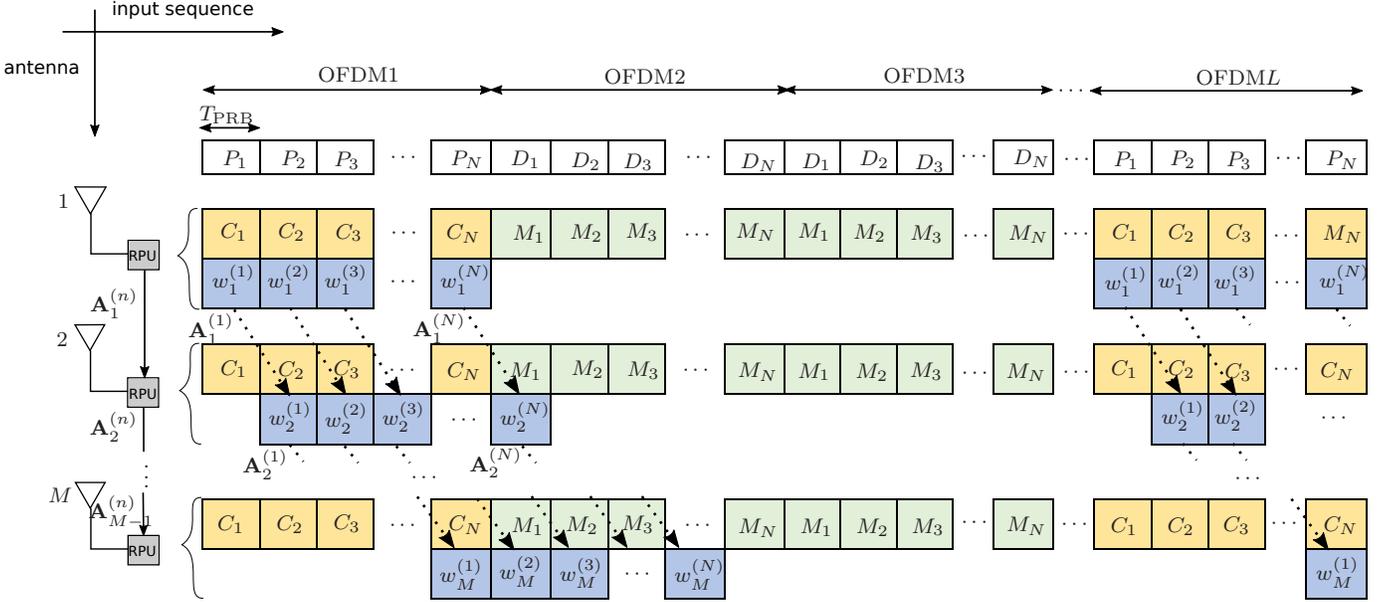}
	\caption{Time diagram representing formulation and filtering/precoding activities performed in the antenna modules. Each OFDM symbol is split into $N_\mathrm{PRB}$ blocks ($N$ in the figure) in the same order as data come out of any of the receiver FFT. Those blocks which contains pilots are shown as $P_{i}$, while those carrying data are denoted as $D_{i}$. Channel estimation is performed during $C_{i}$ blocks, while formulation is done in $\w_{i}$ blocks. Filtering/precoding data is carried out during the MIMO processing blocks, named $\mathrm{M}_{i}$. As it can be observed, all antennas perform their tasks simultaneously, while formulation is done sequentially as a matrix $\A^{(n)}$ passes through the array. In total, $N$ matrices are passed sequentially through antenna $m$, corresponding to $\A_{m}^{(n)}, n=1 \cdots N$. $\w_{i}$ vectors need to be available in the antenna modules before the corresponding data comes out of the receiver FFT so it can be properly processed. Daisy-chain topology exploits the parallelism of the operations by allowing the pipeline of the operations and the fully usage of all dedicated links simultaneously.}
	\label{fig:time_diagram}
\end{figure*}

We can exploit channel correlation based on the Physical Resource Block (PRB) concept in 3GPP. A PRB is a region in frequency-time domain where the channel response is assumed to be approximately constant across all subcarriers within that PRB. Within an OFDM symbol, the number of subcarriers in each PRB and the number of PRB per symbol, defined as $N_{\mathrm{sc,PRB}}$ and $N_{\mathrm{PRB}}$ respectively, are related as follows: $N_{\mathrm{u}} = N_{\mathrm{PRB}} N_{\mathrm{sc,PRB}}$. We define $T_{\mathrm{PRB}}$ as the time needed by $N_{\mathrm{sc,PRB}}$ consecutive subcarriers to come out the FFT.

For each PRB we have a different channel matrix and also MIMO model as in \eqref{eq:ul_model} and \eqref{eq:dl_model}. Then, it is required to have a unique set of vectors $\w_m$ and $\p_m (m=1...M)$ per antenna, as in \eqref{eq:linear_det} and \eqref{eq:linear_prec_i}, for uplink detection and downlink precoding respectively. The phase where these vectors are computed is named $\textit{formulation}$, while the phase where user's data is processed is named $\textit{filtering}$ and $\textit{precoding}$ for UL and DL respectively. To minimize data buffering, formulation needs to be completed before filtering/precoding starts. This imposes the constraint that the formulation phase needs to be finished within one OFDM symbol, or in other words, all antennas need to obtain these vectors and the matrix $\A$ needs also to pass through the array within one OFDM symbol. A diagram of the main activities involved and their timing relationship is shown in Figure \ref{fig:time_diagram}. The analysis assumes that the processing and data transmission are pipelined in each RPU so they concurrently operate.

\subsubsection{Computational complexity}

\begin{itemize}  
\item Formulation phase:
The number of complex multiplications needed to formulate one precoding/filtering vector per antenna are $C_{\mathrm{form}} \approx 2K^{2}$, which represents the matrix-vector product to obtain $\w_{m}$ and the outer product to update $\A_{m}$ according to algorithm \ref{algo:CD}. Other possible required operations such as norm, square root or division are assumed to be negligible.

\item Filtering phase:
During the filtering phase, each RPU performs the required operations  for UL detection. Vectors $\w_{m}$ are applied to all observations (data subcarriers), $y^{u}_{m}$, under the same PRB. The complexity measured in number of complex multiplications per antenna and per $N_{\mathrm{sc,PRB}}$ subcarriers is $C_{\mathrm{filt}} = KN_{\mathrm{sc,PRB}}$.

\item Precoding phase:
During the precoding phase, each RPU performs the operations required by \eqref{eq:linear_prec_i}. Similarly to the filtering case, the same vector $\p_{m}$ is applied to all data vectors $x^{d}_{m}$ under same PRB. The complexity measured in number of complex multiplications per antenna and PRB is $C_{\mathrm{prec}} = KN_{\mathrm{sc,PRB}}$.
\end{itemize}

\subsubsection{Inter-connection data-rate}
\label{section:data-rate}
\begin{itemize}  
	\item Formulation phase:
	The average inter-connection data-rate during formulation can be calculated assuming that the average time to complete a transfer of a matrix $\A$ is $T_{\mathrm{PRB}}$, which leads to an average rate of
	\begin{equation}
	R_{\mathrm{d,form}} = \frac{2w_{\A} K^{2} N_{\mathrm{PRB}}}{T_{\mathrm{OFDM}}},
	\nonumber
	\end{equation}
	where the numerator represents the amount of bits to transfer (all matrices $\A$ in a symbol) and $w_{\A}$ is the bit-width of $\A$ entries (real/imaginary parts).
	
	\item Filtering phase:
	Partial filtering results from each RPU are added up through the chain. The average inter-connection data-rate per dedicated link can be calculated as
	\begin{equation}
	R_{\mathrm{d,filt}} = \frac{2 w_{\mathrm{d}} KN_{\mathrm{u}}}{T_{\mathrm{OFDM}}},
	\nonumber
	\end{equation}
	where $w_{\mathrm{d}}$ is the bit-width of baseband samples exchanged among RPUs.
	
	\item Precoding phase:
	In the precoding phase, the data vectors $\xd$ are passed through the array for processing. Each node receives a vector which is passed to next node without any required pause (broadcasting). This leads to the same data-rate as in the filtering case.
	
\end{itemize}

\subsubsection{Latency}
The processing latency in the formulation phase for one antenna is given from next expression
\begin{equation}
\begin{split}
T_{\mathrm{proc,form}} &= \frac{C_{\mathrm{form}} T_{\mathrm{CLK}}}{N_{\mathrm{mult}}} \\
&\approx \frac{2K^{2} T_{\mathrm{CLK}}}{N_{\mathrm{mult}}},
\end{split}
\nonumber
\end{equation}
where $N_{\mathrm{mult}}$ is the number of multipliers available in each RPU that can be used in parallel, $T_{\mathrm{CLK}}$ is the clock period and we assume that one complex multiplication can be done within one  $T_{\mathrm{CLK}}$. Total latency is expressed as
\begin{equation}
\begin{split}
Lat_{form} &= M \cdot T_{\mathrm{proc, form}} + (N_{\mathrm{RPU}}-1) \cdot T_{\mathrm{trans}},
\end{split}
\nonumber
\end{equation}
where $N_{\mathrm{RPU}}$ is the number of RPUs in the system, and $T_{\mathrm{trans}}$ is the transmission latency between two consecutive RPUs. As said before, formulation needs to be finished within one $T_\mathrm{OFDM}$, therefore the formulation latency is constrained as $Lat_{form} < T_{\mathrm{OFDM}}$. This leads to an upper limit for M as
\begin{equation}
M < \frac{T_\mathrm{OFDM}+T_{\mathrm{trans}}}{T_{\mathrm{proc, form}} + \frac{T_{\mathrm{trans}}}{M_{\mathrm{RPU}}}},
\nonumber
\end{equation}
where $M_{\mathrm{RPU}}=\frac{M}{N_{\mathrm{RPU}}}$ is the number of antennas per RPU, which is considered as a design parameter. We can consider another limit, slightly lower than previous one but easier to extract conclusions as follows
\begin{equation}
M < \frac{T_\mathrm{OFDM}}{T_{\mathrm{proc, form}} + \frac{T_{\mathrm{trans}}}{M_{\mathrm{RPU}}}}.
\nonumber
\end{equation}

We analyze three scenarios:
\begin{itemize}  
	\item $T_{\mathrm{proc, form}} \rightarrow 0$: When processing time is reduced, by increasing $N_{\mathrm{mult}}$ or decreasing $T_{\mathrm{CLK}}$, then transaction time becomes dominant and a reduction in the number of links allow for higher values of $M$. Formally, the upper value for $M$ scales proportionally to $M_{\mathrm{RPU}}$ as follows
	\begin{equation}
	M < M_{\mathrm{RPU}} \cdot \frac{T_\mathrm{OFDM}}{T_{\mathrm{trans}}}.
	\nonumber
	\end{equation}
	\item $T_{\mathrm{trans}} \rightarrow 0$: By decreasing the transaction time the upper limit of $M$ converges to a certain value, which is inversely proportional to the processing time as follows
	\begin{equation}
	M < \frac{T_\mathrm{OFDM}}{T_{\mathrm{proc, form}}}.
	\nonumber
	\end{equation}
	\item $M_{\mathrm{RPU}} \gg \frac{T_{\mathrm{trans}}}{T_{\mathrm{proc, form}}}$. When $M_{\mathrm{RPU}}$ increases beyond a certain value, processing time becomes dominant and we obtain the same limit as previous point.
\end{itemize}

In case of filtering, its related processing is done in parallel as soon as data comes out of the FFT. However, partial results needs to be accumulated through the array from RPU 1 to $N_{\mathrm{RPU}}$. This latency is uniquely due to data transfer through the dedicated links, then
\begin{equation}
\begin{split}
Lat_{\mathrm{filt}} &= (N_{\mathrm{RPU}}-1) \cdot T_{\mathrm{trans}}\\
& < Lat_{\mathrm{form}}\\ & < T_{\mathrm{OFDM}}.
\end{split}
\label{eq:lat_filt}
\end{equation}

\subsubsection{Memory}
In terms of memory requirement, a centralized architecture requires to store the channel matrix $\mathbf{H}$ fully at the CPU, previous to the inversion. There is a channel matrix per PRB, so CSI storage requires $M_{\mathrm{H}} = 2 w_{\mathrm{h}} M K N_{\mathrm{PRB}}$ bits, where $w_{\h}$ represents the bit-width of $\Hbf$ entries (real/imaginary parts), and in order to store the resulting square matrix, $(\Hbf^{H}\Hbf)^{-1}$ requires $M_{\mathrm{inv}} = 2 w_{\mathrm{h}} K^{2} N_{\mathrm{PRB}}$ and therefore the total requirement is: $M_{\mathrm{central}} = M_{\mathrm{H}} + M_{\mathrm{inv}} \approx M_{\mathrm{H}}$.

In the decentralized architecture, each antenna module needs to store the corresponding $\h$, which gets replaced by $\mathbf{w}$ after formulation. Both of them requires the same amount of memory if same bit-width is assumed, which is $M_{\mathrm{w}} = 2 w_{\mathrm{h}} K N_{\mathrm{PRB}}$, and the total amount of memory in the system is: $M_{\mathrm{daisy}} = M \cdot M_{\w} \approx M_{\mathrm{central}}$. Therefore, the total amount of memory required for $\Hbf$ and $\Wbf$ is the same in both systems, however the daisy-chain allows a uniform distribution of the memory requirements across all antenna modules, reducing design complexity, time and cost. As a drawback, we point out the need for data buffering during the filtering phase due to latency in the transfer of partial results, as discussed in the previous subsection (Latency). The buffer size for the RPU closest to the CPU (worst case) can, based on \eqref{eq:lat_filt}, be obtained as
\begin{equation}
M_{\mathrm{buffer}} = \frac{ 2 w_{\mathrm{d}} K N_{\mathrm{u}} Lat_{\mathrm{filt}}}{T_{\mathrm{OFDM}}},
\nonumber
\end{equation}
which is shared by all antennas belonging to that RPU.

\subsection{Comparison}
\begin{table}
	\renewcommand{\arraystretch}{1.3} 
	\caption{Inter-connection data-rate comparison for different system parameters [$G\lowercase{b/s}$]}
	\label{tab:data-rate}
	\centering
	\begin{tabular}{l*{5}{c}}
		\hline
		Scenario    &  &   &  & \\
		$M$	   & 32     & 64     & 128    & 256 \\
		$K$	   & 4      & 8      & 12     & 12 \\
		\hline
		$R_{\mathrm{d,form}} $ & 12.67 & 50.69 & 114.05 & 114.05\\
		$R_{\mathrm{d,filt/prec}} $ & 38.02 & 76.03 & 114.05 & 114.05\\
		\hline
		$R_{\mathrm{c}} $ & 304.13 & 608.26 & 1216.51 & 2433.02\\
	\end{tabular}
\end{table}

Table \ref{tab:data-rate} shows a comparison of interconnection data-rate between daisy-chain and centralized architecture for different scenarios of $M$ and $K$. It is important to remark that $R_{\mathrm{c}}$ corresponds to the aggregated data/rate at the shared bus, while $R_{\mathrm{d}}$ is the average data/rate in each of the RPU-RPU dedicated links. For the centralized case, \eqref{eq:R_central} is used, while for the daisy-chain case, data-rates are detailed according to the different tasks (formulation, filtering and precoding) as described in Section \ref{section:data-rate}. For the numerical results we employ $T_{\mathrm{CLK}}=1\mathrm{ns}$ and $w=12$. The rest of system parameters are as follows according to worst case in 5G NR: $N_{\mathrm{u}}=3300$, $N_{\mathrm{PRB}}=275$, $N_{\mathrm{sc,PRB}}=12$ and $T_{\mathrm{OFDM}}=\frac{1}{120\mathrm{KHz}}$. We observe that for $M=128$ case, daisy-chain requires $\sim 10\%$ of the inter-connection data-rate needed by the centralized case. This number can even decrease as $\frac{M}{K}$ grows. As it is observed, daisy-chain requires much lower inter-connection data-rates than the centralized counterpart. We remark that if we take into account the total inter-connection data-rate in the decentralized case, which is $N_{\mathrm{RPU}} R_{\mathrm{d,form}}$, may easily exceed the centralized counterpart $R_{\mathrm{c}}$, however the decentralized architecture is able to distribute this data-rate equally across all links, reducing considerably the requirements for each of them.

\begin{table}
	\renewcommand{\arraystretch}{1.3} 
	\caption{Computational complexity comparison for different system parameters [$GOPS$]}
	\label{tab:complexity}
	\centering
	\begin{tabular}{l*{5}{c}}
		\hline
		Scenario    &  &   &  & \\
		$M$	   & 32     & 64     & 128    & 256 \\
		$K$	   & 4      & 8      & 12     & 12 \\
		\hline
		$C_{\mathrm{d,ant}} $ & 1.58 & 3.17 & 4.75 & 4.75\\
		\hline
		$C_{\mathrm{c}} $ & 50.69 & 202.75 & 608.26 & 1216.51\\
	\end{tabular}
\end{table}

Table \ref{tab:complexity} shows a computational complexity comparison between centralized and decentralized architectures. $C_{\mathrm{d,ant}}$ represents complex multiplications per second and per antenna in the decentralized case, while $C_{\mathrm{c}}$ is the computational complexity required by CPU in centralized system. In both cases, only filtering/precoding is taken into account because formulation depends on how often channel estimation is available. The result of the comparison is meaningful. Even tough, the total complexity in the decentralized system is approximately equal to the centralized counterpart, this is $M \cdot C_{\mathrm{d,ant}} \approx C_{\mathrm{c}}$, our decentralized solution is able to divide equally the total computational complexity among all existing RPUs, relaxing considerably the requirements compared to the CPU in centralized case. The relatively low number obtained for the daisy-chain allows the employment of cheap and general processing units in each RPU, in opposite to the centralized architecture where the total complexity requirement is on the CPU.

Numerical results for latency are shown in table \ref{tab:latency} for $N_{\mathrm{mult}}=8$, $T_{\mathrm{trans}}=100ns$ and $N_{\mathrm{RPU}}=\frac{M}{4}$. These design parameters meets the constraint $Lat < T_\mathrm{OFDM}$ up to $M=128$. For larger arrays there are different solutions: allows the latency to increase and buffer the needed input data (need for larger memory), group more antennas in each RPU (which reduces the number of links but increase the complexity of the CPU controlling each RPU), and/or employ low-latency link connections (reducing $T_{\mathrm{trans}}$ at the expense of higher cost). It is relevant to note that $T_\mathrm{OFDM}$ value in the table is the worst case $1/120KHz$.
\begin{table}
	\renewcommand{\arraystretch}{1.3} 
	\caption{Latency comparison for different system parameters}
	\label{tab:latency}
	\centering
	\begin{tabular}{l*{5}{c}}
		\hline
		Scenario    &  &   &  & \\
		$M$	   & 32     & 64     & 128    & 256 \\
		$K$	   & 4      & 8      & 12     & 12 \\
		\hline
		$Lat(\mu s)$ & 0.83 & 2.52 & 7.71 & 15.52\\
		$Lat/T_{\mathrm{OFDM}}$ & 0.10 & 0.30 & 0.92 & 1.86\\
	\end{tabular}
\end{table}

In table $\ref{tab:memory}$ a comparison between both systems from memory perspective is shown. If $w_{\h}=12$ and $N_{\mathrm{PRB}}=275$ are assumed, then for the $M=128$ case, each antenna module in the daisy-chain only needs $\sim 80 \mathrm{kbits}$ of memory and each RPU needs at maximum $354 \mathrm{kbits}$ for buffering, while in the centralized architecture, the central processor requires $\sim 11 \mathrm{Mbits}$, which is a challenging number for a cache memory. The memory requirement grows proportionally to M in the centralized system, while that does not happen in $M_{\w}$. In order to reduce the buffer size we can group more antennas in each RPU, so all of them share the same buffer memory.
\begin{table}
	\renewcommand{\arraystretch}{1.3} 
	\caption{Memory requirement comparison for different system parameters [$kbits$]}
	\label{tab:memory}
	\centering
	\begin{tabular}{l*{5}{c}}
		\hline
		Scenario    &  &   &  & \\
		$M$	   & 32     & 64     & 128    & 256 \\
		$K$	   & 4      & 8      & 12     & 12 \\
		\hline
		$M_{\w} (ant) $ & 26.4 & 52.8 & 79.2 & 79.2\\
		$M_{\mathrm{buffer}} (RPU) $ & 26.6 & 114.1 & 353.6 & 718.5\\	
		\hline
		$M_{\mathrm{H}}  $ & 844.8 & 3379.2 & 10137.6 & 20275.2\\
		$M_{\mathrm{inv}}  $ & 105.6 & 422.4 & 950.4 & 950.4\\
	\end{tabular}
\end{table}

\section{Conclusions}
\label{section:conclusions}

In this article we proposed an architecture for Massive MIMO base-station for uplink detection and downlink precoding, which is based on the fully distribution of the required baseband processing across all antenna modules in the system. The main goal is to reduce the inter-connection data-rate needed to carry out the processing tasks and enable the scalability needed in Massive MIMO. We continued our previous work in this topic \cite{jesus} \cite{muris} by a detailed introduction to the CD algorithm and its application to the Massive MIMO case. We also presented an extensive analysis of the expected performance of the system, the inter-connection data-rate, complexity, latency and memory requirements. The results show that there is a performance loss compared to ZF, but unlike MF, our proposed method does not have an error floor, from which we can not recover, while the inter-connection data-rate is distributed avoiding the aggregation of the centralized approach. At the same time, complexity and memory requirements per antenna module are easy to meet with commercial off-the-self hardware, which proves the scalability of this solution.

\appendix

In the appendix we present two propositions which are going to support the proof of propositions \ref{prop:SIR} and \ref{prop:SINR} seen in Section \ref{section:analysis}. We start with some important considerations.

Let's define the random matrix $\Q_{i}$ as
\begin{equation}
\Q_{i} \triangleq \I_K - \mu_{i} \h_{i}\h_{i}^{H},
\label{eq:Q}
\end{equation}
where $\h_{i} \sim \mathcal{CN}(0, \I)$ and independent CSI is assumed between antennas, this is $\E \{\h^{H}_{i} \h_{j}\}=\delta_{ij}, \forall i,j$. Additionally, based on \eqref{eq:Q} we can rewrite \eqref{eq:CD_A_impl} as
\begin{equation}
\A_m = \Q_{1} \Q_{2} \cdots \Q_{m},
\label{eq:A_as_prod_Q}
\end{equation}
as well as \eqref{eq:CD_W2}, which can be expressed in the following form
\begin{equation}
\w_m = \mu_{m} \Q_{1} \Q_{2} \cdots \Q_{m-1} \h_{m}.
\label{eq:CD_transformations}
\end{equation}

We list in Table \ref{table:properties} some useful properties which are used throughout this section. 

\begin{table}[h!]
	\begin{center}
		\caption{PROPERTIES}
		\label{table:properties}
		\begin{tabular}{ll}
			\hline
			$\E \left\lbrace \frac{\h\h^{H}}{\| \h \|^{2}} \right\rbrace$      & $\frac{1}{K} \I, \label{eq:hhovernormh}$   \\
			$\E \left\lbrace \frac{\h\h^{H}}{\|\h\|^{4}} \right\rbrace$  & $\frac{1}{K(K-1)} \I \label{eq:hhovernormhf}$      \\
			$\left[ \E \left\lbrace \frac{|h_{k}|^{2} \h\h^H}{\normhf} \right\rbrace \right]_{i,j} $   &  
			$\begin{cases}
				\E \left\lbrace \frac{|h_{k}|^{4}}{\normhf} \right\rbrace = \frac{2}{K(K+1)} & \text{if } k=i=j\\
				\E \left\lbrace \frac{|h_{k}|^{2} |h_{i}|^{2}}{\normhf} \right\rbrace = \frac{1}{K(K+1)} & \text{if } k \neq i=j\\
				0 & \text{if } i \neq j,\\
			\end{cases} \label{eq:h2hhovernormh4}$  \\
			$\E \{\Q_{m}\}$ & $\nu \I, \forall m$  \\
			$\E\{\A\}$ & $\nu^{M} \I \label{eq:E{A}}$ \\
		\end{tabular}
	\end{center}
\end{table}

The previous properties are based on the following proofs:
\begin{itemize}
	\item $\E \left\lbrace \frac{\h\h^{H}}{\| \h \|^{2}} \right\rbrace = a \I$, where $a$ is a complex number, due to the i.i.d. property among elements in $\h$. Applying the trace operator to both sides of previous equality, it follows that $a=\frac{1}{K}$, which proves the property.
	\item $\E \left\lbrace \frac{\h\h^{H}}{\|\h\|^{4}} \right\rbrace = a \I$, by the same principle as previous property.  Applying trace to both sides leads to: $\E \left\lbrace \frac{1}{\|\h\|^{2}} \right\rbrace = aK$. Let define the random variable $\mathbf{Y}=\|\h\|^{2}$, then $\mathbf{Y}$ follows a Chi-Square distribution with 2K-degrees of freedom, this is $\mathbf{Y} \sim \chi^{2}(2K)$, such that: $f_{\mathbf{Y}}(y) = \frac{1}{\Gamma(K)} y^{K-1} e^{-y}$. Then follows: $\E \left\lbrace \frac{1}{\mathbf{Y}} \right\rbrace = \int_{0}^{\infty} y^{-1}f_{\mathbf{Y}}(y)dy = \frac{\Gamma(K-1)}{\Gamma(K)}=\frac{1}{K-1}$, therefore: $a=\frac{1}{K(K-1)}$, and proving the property.
	\item $\E \left\lbrace \frac{|h_{k}|^{2} \h\h^H}{\normhf} \right\rbrace$ is also a diagonal matrix as previous properties. The values of the elements in the main diagonal can be obtained as follows: $1 = \E \left\lbrace \frac{\|\h\|^{4}}{\|\h\|^{4}} \right\rbrace = K \E \left\lbrace \frac{|h_{k}|^{4}}{\|\h\|^{4}} \right\rbrace + K(K-1) \E \left\lbrace \frac{|h_{k}|^{2}|h_{i}|^{2}}{\|\h\|^{4}} \right\rbrace$,
	where the next equality has been used: $\|\h\|^{4}=\sum_{k=1}^{K} |h_{k}|^{4} + \sum_{k=1}^{K} \sum_{i=1,i\neq k}^{K} |h_{k}|^{2} |h_{i}|^{2}$. Then deriving one of the expectations leads to the other one. Let's define the random variable $\mathbf{Z}=\frac{\|\h\|^{2}}{|h_{k}|^{2}}$ as: $\mathbf{Z} = 1 + \frac{\mathbf{Y}}{\mathbf{X}}$, where $\mathbf{X}=|h_{k}|^2$ and $\mathbf{Y}=\sum_{i \neq k}^{K} |h_{i}|^{2}$. $\mathbf{X}$ follows an exponential distribution, $f_{\mathbf{X}}(x)=e^{-x}$ and $\mathbf{Y} \sim \chi^{2}(2K-2)$.
	To obtain $f_{\mathbf{Z}}(z)$, first we express\\ $F_{Z}(z)=P(\mathbf{Z}\leq z)=P(\mathbf{Y}\leq \mathbf{X}(z-1))=\\ \int_{0}^{\infty} f_{\mathbf{X}}(x) \int_{0}^{x(z-1)} f_{\mathbf{Y}}(y)dydx$. The derivative with respect to $z$ is: $f_{\mathbf{Z}}(z) = \int_{0}^{\infty}f_{\mathbf{X}}(x)f_{\mathbf{Y}}(x(z-1))xdx = \frac{1}{\Gamma(K-1)}(z-1)^{K-2} \int_{0}^{\infty}x^{K-1}e^{-xz}dx \\= \frac{1}{z^{2}}(1-\frac{1}{z})^{K-2}(K-1)$ for $z\geq 1$ and $0$ otherwise, where the definition of gamma function based on improper integral has been used. Finally, $\E \left\lbrace \frac{1}{\mathbf{Z}^{2}} \right\rbrace = \int_{1}^{\infty} z^{-2}f_{\mathbf{Z}}(z)dz=\frac{2}{K(K+1)}$, proving the property.
	\item $\E \{\Q\}=\I-\mu\E\left\lbrace \frac{\h\h^{H}}{\|\h\|^{2}}\right\rbrace=\I-\mu\frac{1}{K}$, due to first property, where $\mu_m=\frac{\mu}{\|\h_{m}\|^{2}}$ has been used and the index $m$ in $\Q$ dropped for clarity.
	\item $\E \{\A\}=\E \left\lbrace \prod_{m=1}^{M} \Q_{m} \right\rbrace = \prod_{m=1}^{M} \E \Q_{m}$ due to statistical independence among antennas, then proving the property.
\end{itemize}


%
\begin{theorem}
	\label{prop:E{QDQ}}
	For a matrix $\Q$ defined as in equation \eqref{eq:Q} and $\mu$ as in \eqref{eq:mu_m}, the next result holds for any deterministic diagonal matrix $\mathbf{D}$  
	\begin{equation}
	\E \left\lbrace \Q \D \Q^{H} \right\rbrace = \alpha \mathbf{D} + \beta \Tr(\D) \I,\\
	\label{eq:E{QDQ}}
	\end{equation}
	where $\alpha$ and $\beta$ are defined in table \ref{table:parameters}.
\end{theorem}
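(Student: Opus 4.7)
The plan is to expand $\Q\D\Q^{H}$ using the definition $\Q = \I_K - \mu_m \h\h^{H}$ with $\mu_m = \mu/\|\h\|^{2}$, and then take the expectation term by term using the three $\h$-moment identities already listed in Table~\ref{table:properties}. Expanding the product gives four terms: the constant $\D$, two mixed linear terms of the form $\mu_m \D\h\h^{H}$ and $\mu_m \h\h^{H}\D$, and a quadratic term $\mu_m^{2}\,\h\h^{H}\D\h\h^{H}$. The first three evaluate immediately, since the step-size absorbs $\|\h\|^{2}$ and the property $\E\{\h\h^{H}/\|\h\|^{2}\}=\I/K$ applies directly; together they contribute $(1-2\mu/K)\D$. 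Since $\D$ is diagonal, pulling it out of the expectation is painless.

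The work is in the quadratic term. First I would note that, because $\D$ is diagonal, the scalar factor in the middle can be written as $\h^{H}\D\h=\sum_{k} D_{k,k}|h_{k}|^{2}$, which lets me express
\begin{equation}
\mu_m^{2}\,\h\h^{H}\D\h\h^{H}=\mu^{2}\sum_{k} D_{k,k}\,\frac{|h_{k}|^{2}\,\h\h^{H}}{\|\h\|^{4}}.
\nonumber
\end{equation}
Now the third entry of Table~\ref{table:properties} applies to each summand: the expectation is a diagonal matrix whose $(i,i)$ entry is $2/(K(K+1))$ when $i=k$ and $1/(K(K+1))$ when $i\neq k$. Summing over $k$ gives a diagonal matrix whose $(i,i)$ entry is $(D_{i,i}+\Tr(\D))/(K(K+1))$, i.e.\ the matrix $\bigl(\D+\Tr(\D)\I\bigr)/(K(K+1))$.

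Collecting the four contributions yields
\begin{equation}
\E\{\Q\D\Q^{H}\}=\Bigl(1-\tfrac{2\mu}{K}+\tfrac{\mu^{2}}{K(K+1)}\Bigr)\D+\tfrac{\mu^{2}}{K(K+1)}\Tr(\D)\I,
\nonumber
\end{equation}
which is exactly $\alpha\D+\beta\Tr(\D)\I$ by the definitions in Table~\ref{table:parameters}.

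The only genuine obstacle is the quadratic term, and it reduces entirely to the third identity in Table~\ref{table:properties}; the rest is straightforward algebraic bookkeeping. The diagonality of $\D$ is essential here, because it is what lets the scalar $\h^{H}\D\h$ pull apart into a single sum indexed by $k$, aligning the structure of the expectation with the already-computed moment. No additional probabilistic argument is needed beyond the i.i.d.\ complex Gaussian assumption on $\h$ and the listed properties.
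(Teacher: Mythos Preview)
Your proposal is correct and follows essentially the same approach as the paper: expand $\Q\D\Q^{H}$, evaluate the linear terms via $\E\{\h\h^{H}/\|\h\|^{2}\}=\I/K$, and handle the quadratic term by writing $\h^{H}\D\h=\sum_{k}D_{k,k}|h_{k}|^{2}$ so that the third identity in Table~\ref{table:properties} applies termwise. The only cosmetic difference is that the paper expresses $\D=\sum_{k}d_{k}\e_{k}\e_{k}^{T}$ before pulling it through the expectation, which is equivalent to your scalar expansion of $\h^{H}\D\h$.
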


\begin{proof}
	Let's define a deterministic diagonal matrix as $\D = diag\{d_{1},d_{2},\cdots,d_{K}\}$ and a random matrix $\Q$ defined according to \eqref{eq:Q}. Taking into account the properties in Table \ref{table:properties} we can establish the following
	\begin{equation}
	\begin{split}
	&\E \left\lbrace \Q \D \Q^{H} \right\rbrace \\&= \E \left\lbrace \D -\mu \D \hhovernormh - \mu \hhovernormh \D + \mu^{2} \frac{\h\h^H\D\h\h^H}{\normhf} \right\rbrace \\
	&= \D -2\frac{\mu}{K}\D + \mu^{2} \E \left\lbrace \frac{\h\h^H\D\h\h^H}{\normhf} \right\rbrace,
	\nonumber
	\end{split}
	\end{equation}
	where
	\begin{equation}
	\begin{split}
	\E \left\lbrace \frac{\h\h^H\D\h\h^H}{\normhf} \right\rbrace &= \E \left\lbrace \frac{\h\h^H}{\normhf} \left( \sum_{k=1}^{K} d_{k}\e_{k}\e_{k}^{T} \right) \h\h^H \right\rbrace \\
	&= \sum_{k=1}^{K} d_{k} \E \left\lbrace \frac{|\h_{k}|^{2} \h\h^H}{\normhf} \right\rbrace,
	\nonumber
	\end{split}
	\end{equation}
	which can be simplified as follows taking into account properties in table \ref{table:properties},
	\begin{equation}
	\begin{split}
	\E \left\lbrace \frac{\h\h^H\D\h\h^H}{\normhf} \right\rbrace &= \frac{1}{K(K+1)}\D + \frac{\Tr(\D)}{K(K+1)}  \I,
	\nonumber
	\end{split}
	\end{equation}
	proving the proposition.
\end{proof}

This proposition leads to a more general one.
\begin{theorem}
	\label{prop:E{ADA}}
	For a matrix $\A_{m}$ defined as in equation \eqref{eq:A_as_prod_Q} the next result holds for any deterministic diagonal matrix $\mathbf{D}$
	\begin{equation}
	\begin{split}
	\mathbb{E} \left\lbrace \A_{m} \mathbf{D} \A_{m}^{H} \right\rbrace &=
	\alpha^{m} \left[ \D - \D_{a} \right] + \epsilon^{m} \D_{a},
	\end{split}
	\label{eq:E{ADA^H}}
	\end{equation}
	where $\D_{a} = \frac{\Tr(\D)}{K}\I$, and for the particular case of $\mathbf{D}=\mathbf{I}$ it reduces to $
	\mathbb{E} \left\lbrace \A_{m} \A_{m}^{H} \right\rbrace =\epsilon^{m} \I$,
	and for $\mathbf{D} = \e_{k}^{T}\e_{k}$ the following result applies
	\begin{equation}
	\begin{split}
	\e_{k}^T \E \left\lbrace \A_{m} \e_{k}\e_{k}^{T}\A_{m}^{H} \right\rbrace \e_{k}
	&= \alpha^{m} \left(1-\frac{1}{K}\right) + \epsilon^{m} \frac{1}{K}.
	\end{split}
	\nonumber
	\end{equation}
\end{theorem}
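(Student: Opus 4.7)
The plan is to prove the statement by induction on $m$, using Proposition \ref{prop:E{QDQ}} as the single-step lemma and exploiting the fact that $\A_m$ can be peeled off as $\A_m = \A_{m-1}\Q_m$ with $\Q_m$ statistically independent of $\A_{m-1}$ (by independence of the channel vectors across antennas). The base case $m=0$ is immediate since $\A_0=\I_K$ and $\alpha^0[\D-\D_a]+\epsilon^0\D_a=\D$.

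For the inductive step, I would first condition on $\A_{m-1}$ and pull the inner expectation inside, writing
\begin{equation}
\E\{\A_m \D \A_m^H\} = \E\{\A_{m-1}\,\E\{\Q_m \D \Q_m^H\}\,\A_{m-1}^H\}.
\nonumber
\end{equation}
Applying Proposition \ref{prop:E{QDQ}} to the inner expectation gives $\alpha \D + \beta\,\Tr(\D)\I$, which is diagonal, so the inductive hypothesis applies to both resulting terms:
\begin{equation}
\E\{\A_m \D \A_m^H\} = \alpha\,\E\{\A_{m-1}\D\A_{m-1}^H\} + \beta\,\Tr(\D)\,\E\{\A_{m-1}\A_{m-1}^H\}.
\nonumber
\end{equation}
The second expectation on the right is the $\D=\I$ specialization, which the statement also claims to equal $\epsilon^{m-1}\I$; I would verify this first as a small sub-induction (it collapses because $\alpha+\beta K=\epsilon$, as follows by direct substitution of the definitions in Table \ref{table:parameters}).

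With that in hand, substituting the inductive hypothesis and using $\Tr(\D)\I = K\D_a$ reduces the update to
\begin{equation}
\alpha^m(\D-\D_a) + \epsilon^{m-1}(\alpha + \beta K)\D_a = \alpha^m(\D-\D_a) + \epsilon^m \D_a,
\nonumber
\end{equation}
closing the induction. The two specializations then follow by inspection: $\D=\I$ yields $\D_a=\I$ and hence $\epsilon^m\I$; $\D=\e_k\e_k^T$ yields $\D_a = \tfrac{1}{K}\I$, and sandwiching with $\e_k^T(\cdot)\e_k$ gives the stated scalar $\alpha^m(1-\tfrac{1}{K})+\epsilon^m\tfrac{1}{K}$.

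The only mildly delicate point I anticipate is ensuring that the structure is genuinely preserved under the recursion, i.e.\ that after one step the "diagonal $+$ scalar multiple of identity" form does not leak into a more general matrix; this is guaranteed because $\alpha\D + \beta\Tr(\D)\I$ stays diagonal whenever $\D$ is diagonal, so the induction hypothesis can be reapplied term by term. Everything else is bookkeeping with the constants $\alpha$, $\beta$, $\epsilon$ defined in Table \ref{table:parameters}.
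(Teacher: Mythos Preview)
Your proof is correct and follows essentially the same strategy as the paper: peel off one $\Q$-factor at a time using Proposition~\ref{prop:E{QDQ}}, exploit independence across antennas, and use the fact that the intermediate matrix stays diagonal so the single-step lemma can be reapplied. The only cosmetic difference is that the paper writes the resulting scalar recursion $\D_m=\alpha\D_{m-1}+\beta\,\Tr(\D_{m-1})\I$ explicitly and unrolls it as a geometric sum in $r=\alpha/\epsilon$, whereas you verify the closed form directly by induction; both routes hinge on the same identity $\alpha+\beta K=\epsilon$.
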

\begin{proof}
	Let's define a sequence of diagonal matrices $\{\D_m\}_{m=0,\dots,M}$, which can be defined recursively as
	\begin{equation}
	\begin{split}
	\D_{m} = 
	\begin{cases}
	\E \left\lbrace \Q_{m} \D_{m-1} \Q_{m}^{H} \right\rbrace & \text{if } m > 0\\
	\D & \text{if } m = 0\\
	\end{cases}\\
	\end{split}
	\nonumber
	\end{equation}
	where $\Q$ is a matrix defined according to \eqref{eq:Q}.
	From proposition \ref{prop:E{QDQ}} we know that
	\begin{equation}
	\begin{split}
	\D_{m}
	&= \alpha \D_{m-1} + \beta \Tr(\D_{m-1}) \I,\\
	\end{split}
	\nonumber
	\end{equation}
	and therefore $\Tr(\D_{m}) = \epsilon \Tr(\D_{m-1})$, following that $\Tr(\D_{m}) = \epsilon^{m}\Tr(\D_{0})$, 
	which leads to
	\begin{equation}
	\begin{split}
	\D_{m} &= \alpha \D_{m-1} + \Tr(\D_{0}) \beta \epsilon^{m-1} \I \\
	&= \alpha^{m} \D_{0} + \Tr(\D_{0}) \beta  \epsilon^{m-1} \sum_{i=0}^{m-1} r^{i} \I, \\
	\end{split}
	\nonumber
	\end{equation}
	for $m>0$, where $r=\frac{\alpha}{\epsilon} < 1$, and finally taking into account that $\D_{m}=\E \left\lbrace \A_{m} \D_{0} \A_{m}^{H} \right\rbrace$ the proposition is proved.
\end{proof}

\subsection{Proof of Proposition \ref{prop:SIR}}
\label{proof:SIR}
We prove the proposition by derivation of analytical expressions for $\E|E_{k,k}|^{2}$ and $\E \left\lbrace\sum_{i=1,i \neq k}^{K} |E_{k,i}|^2 \right\rbrace$. From the properties shown in Table \ref{table:properties}, $\E|E_{k,k}|^{2}$ is expressed as
\begin{equation}
\begin{split}
&\E |E_{k,k}|^{2} = \E |\e_{k}^{T} \Eu \e_{k}|^{2} \\
&=1 - \e_{k}^{T} \E \{\A\} \e_{k} -\e_{k}^{T} \E\{\A^{H}\} \mathbf{e}_{k} + \e_{k}^{T}  \E \{\A\e_{k}\e_{k}^{T}\A^{H}\}\e_{k} \\
&= 1 - 2\nu^{M} + \alpha^{M} \left(1-\frac{1}{K}\right) + \epsilon^{M} \frac{1}{K}, 
\end{split}
\label{eq:num_analytical}
\end{equation}
and for the IUI term
\begin{equation}
\begin{split}
&\E \left\lbrace \sum_{i=1,i \neq k}^{K} |E_{k,i}|^2 \right\rbrace = \E \| \e_{k}^{T} \Eu \|^{2} - \E |E_{k,k}|^{2} \\
&= \e_{k}^{T} \E \{ \mathbf{AA}^{H} \} \e_{k} - \e_{k}^{T}  \E \{  \mathbf{A}\e_{k}\e_{k}^{T}\mathbf{A}^{H} \}\e_{k} \\
&= \left(1-\frac{1}{K}\right) \cdot \left( \epsilon^{M} - \alpha^{M} \right),
\label{eq:den_analytical}
\end{split}
\end{equation}
which proves the first part of the proposition.

In the limit when $M \to \infty$, if the ratio $\frac{M}{K}$ is kept constant, then $\epsilon^{M} \to e^{-\mu(2-\mu)\frac{M}{K}}$. Similarly, for $\alpha$ we have $\alpha^{M} \to e^{-2\mu\frac{M}{K}}$ under same conditions. Given that, we have that $\epsilon^M - \alpha^M \to e^{-2\mu\frac{M}{K}} \left[e^{\mu^2 \frac{M}{K}}-1\right]$. If we assume $\frac{M}{K}$ is large enough, such that $\mu^2 \frac{M}{K} \gg 0$, then $1$ is negligible in the second term (within brackets), and therefore $\left( 1-\frac{1}{K}\right)(\epsilon^M - \alpha^M) \to e^{-\mu(2-\mu)\frac{M}{K}}$.
In the numerator, assuming $\mu \frac{M}{K} \gg 0$, then $1 - 2\nu^{M} + \alpha^{M} (1-\frac{1}{K}) + \epsilon^{M} \frac{1}{K} \to 1$ when $M \to \infty$ and $\frac{M}{K}$ kept constant. Then, under previous assumptions regarding the ratio $\frac{M}{K}$, $\text{SIR} \to e^{\mu(2-\mu)\frac{M}{K}}$ when $M \to \infty$.
Based on this limit, we can establish the following approximation: $\text{SIR} \approx e^{\mu(2-\mu)\frac{M}{K}} = \text{SIRa}$ for large values of $M$. To give an idea of the validity of this approximation, we give some numerical values. For example, for $M=128$, $K=16$ and $\mu=1$ leads to $\text{SIR(dB)}=36.2dB$, while $\text{SIRa(dB)}=34.7dB$, resulting in a relative error of $4\%$. For $M=256$, $K=32$ and $\mu=1$, leads to an error of $2\%$, while the error goes down to $1\%$ for $M=512$ and $K=64$, approaching $0$ in the limit, and proving the second part of the proposition.

\subsection{Proof of Proposition \ref{prop:SINR}}
\label{proof:SINR}
The proof of the proposition is based on the corresponding proof to SIR expression (Appendix-\ref{proof:SIR}). The noise term, $\E|z_{k}|^2$, is the only term which has not been analyzed in the proof of Proposition \ref{proof:SIR}. This term can be computed as
\begin{equation}
\mathbb{E} |z_k|^{2} = N_{0}  \e_{k}^{T} \sum_{m=1}^{M} \mathbb{E} \left\lbrace \mathbf{w}_{m} \mathbf{w}^{H}_{m} \right\rbrace \e_{k}.
\label{eq:noise_power}
\end{equation}

Recalling that $\mathbf{w}_{m} = \mu \mathbf{A}_{m-1} \frac{\mathbf{h}_{m}}{\|\mathbf{h}_{m}\|^{2}}$ and taking into account properties in Table \ref{table:properties}, \eqref{eq:noise_power} can continue as
\begin{equation}
\begin{split}
\E |z_k|^{2}  &= \frac{\mu^2 N_{0}}{K(K-1)}  \e_{k}^{T} \sum_{m=1}^{M} \mathbb{E} \left\lbrace \mathbf{A}_{m-1} \mathbf{A}_{m-1}^{H} \right\rbrace \e_{k} \\
&=\frac{N_{0}}{K-1} \cdot \left( \frac{\mu}{2-\mu}\right) \cdot \left(1-\epsilon^{M}\right),
\end{split}
\label{eq:post_proc_noise}
\end{equation}
where Proposition $\ref{prop:E{ADA}}$ has been used, and shows that the post-processing noise power per user does not depend on $M$. This result, together with \eqref{eq:SINR} and Proposition \ref{prop:SIR} leads to the final expression shown in \eqref{eq:SINR_CD}, and proving the first part of the proposition.

In the limit, $\E |z_k|^{2} \to \frac{N_{0}}{K-1} \cdot \frac{\mu}{2-\mu}$ when $M \to \infty$. Based on this result, we can establish an approximation, similarly to the proof of Appendix-\ref{proof:SIR}, consisting of: $\E |z_k|^{2} \approx \frac{N_{0}}{K-1} \cdot \frac{\mu}{2-\mu}$ for large values of $M$. As an example of the validity of this approximation, let's consider a Massive MIMO scenario such as: $M=128$, $K=16$, $\mu=0.4$, and $N_{0}=1$. This leads to a relative error of $0.13\%$ for magnitudes in $dB$. This approximation, together with the one in Proposition \ref{prop:SIR}, provides \eqref{eq:SINR_CD_limit}. To check the validity for this approximation, for the same scenario as before, \eqref{eq:SINR_CD} and \eqref{eq:SINR_CD_limit} provide $16.60dB$ and $16.66dB$ respectively, which leads to a relative error of $0.36\%$. This completes the proof of current proposition.

\subsection{Proof of Proposition \ref{prop:mu_init}}
\label{proof:mu_init}
If $\eqref{eq:SINR_CD_limit}$ is denoted as $\SINRap$, then 
maximizing this value is equivalent to minimizing the inverse value, whose derivative is
\begin{equation}
\frac{\partial \SINRap^{-1}}{\partial \mu} = -2 (1-\mu) \frac{M}{K} e^{-\mu(2-\mu)\frac{M}{K}} + \frac{1}{K \cdot SNR} \frac{2}{(2-\mu)^2}
\nonumber
\end{equation}
and by setting to 0 leads to an expression which does not have closed form. However, which can be further simplified as: $4 M \cdot \mathrm{SNR} = e^{2\mu \frac{M}{K}}$, leading to \eqref{eq:mu_init} and proving the proposition.

\subsection{Proof of Proposition \ref{prop:W_power}}
\label{proof:W_power}
From \eqref{eq:noise_power} and \eqref{eq:post_proc_noise} we can derive the exact expression as
\begin{equation}
\begin{split}
\E \| \Wbf \|^{2}_{F} &= \Tr \E \left\lbrace \Wbf^{H} \Wbf \right\rbrace \\
&= \frac{K}{K-1} \cdot \frac{\mu}{2-\mu} \cdot \left( 1-\epsilon^{M} \right),
\end{split}
\end{equation}
proving the proposition.
\bibliographystyle{IEEEtran}
\bibliography{IEEEabrv,references}
\end{document}